\newtheorem{lemma}{Lemma}
\newtheorem{theorem}{Theorem}
\theoremstyle{definition}
\newtheorem{definition}{Definition}
\crefname{@theorem}{Theorem}{Theorems}
\newtheorem{Observation}{Observation}[section]
\newcommand{\calR}{\mathcal R}
\newcommand{\ceil}[1]{\left\lceil{#1}\right\rceil}
\newcommand{\floor}[1]{\left\lfloor{#1}\right\rfloor}
\newcommand{\eps}{\varepsilon}
\newcommand{\xor}{\oplus}
\newcommand{\sm}{\,\setminus\,}
\newcommand{\abs}[1]{\left | #1 \right |}
\newcommand{\set}[1]{\left \{ #1 \right \}}
\newcommand{\oneToN}[1]{\set{1, \ldots, #1}}
\newcommand*\samethanks[1][\value{footnote}]{\footnotemark[#1]}
\def\squarebox#1{\hbox to #1{\hfill\vbox to #1{\vfill}}}
\def\qed{\hspace*{\fill}
    \vbox{\hrule\hbox{\vrule\squarebox{.5em}\vrule}\hrule}}
\begin{document}

\title{The Power of Two Choices with Simple Tabulation}
\author{Søren Dahlgaard\thanks{Research partly supported by Mikkel Thorup's
    Advanced Grant from the Danish Council for Independent Research under the
Sapere Aude research carrier programme.}, Mathias Bæk Tejs Knudsen\samethanks\
\thanks{Research partly supported by the FNU project AlgoDisc - Discrete
Mathematics, Algorithms, and Data Structures.}, Eva Rotenberg, and Mikkel
Thorup\samethanks[1] \\ University of Copenhagen \\
\tt{\{soerend,knudsen,roden,mthorup\}@di.ku.dk}
}
\date{}

\maketitle


\begin{abstract}
The power of two choices is a classic paradigm for load balancing when
assigning $m$ balls to $n$ bins. When placing a ball, we pick two bins
according to two hash functions $h_0$ and $h_1$, and place the ball
in the least loaded bin.  Assuming fully random hash functions, when
$m=O(n)$, Azar et al.~[STOC'94] proved that the maximum load is $\lg
\lg n + O(1)$ with high probability.

In this paper, we investigate the power of two choices when the hash
functions $h_0$ and $h_1$ are implemented with simple tabulation,
which is a very efficient hash function evaluated in constant
time. Following their analysis of Cuckoo hashing [J.ACM'12],
P\v{a}tra\c{s}cu and Thorup claimed that the expected maximum load
with simple tabulation is $O(\lg\lg n)$. This did not include any high
probability guarantee, so the load balancing was not yet to be
trusted.

Here, we show that with simple tabulation, the maximum load is
$O(\lg\lg n)$ with high probability, giving the first constant time hash
function with this guarantee. We also give a concrete example
where, unlike with fully random hashing, the maximum load is not
bounded by $\lg \lg n + O(1)$, or even $(1+o(1))\lg \lg n$ with high
probability. Finally, we show that the expected maximum load is $\lg
\lg n + O(1)$, just like with fully random hashing.
\end{abstract}

\section{Introduction}
Consider the problem of placing $n$ balls into $n$ bins. If the balls
are placed independently and uniformly at random, it is well known that
the maximum load of any bin is $\Theta(\log n/\log\log n)$ with high
probability (whp) \cite{Gonnet81probes}. Here, by high probability,
we mean probability $1-O(n^{-\gamma})$ for arbitrarily large $\gamma=O(1)$. An alternative variant
chooses $d$ possible bins per ball independently and uniformly at
random, and places the ball in the bin with the lowest load (breaking ties
arbitrarily). It was shown by Azar et al.~\cite{azar99lglgn} that
with this scheme the maximum load, surprisingly, drops to $\log\log
n/\log d + O(1)$ whp. When $d=2$, this is known as the power of two
choices paradigm. Applications are surveyed in \cite{Mitzenmacher01thesis,mitzenmacher01twochoice}.

Here, we are interested in applications where the two bins are picked
via hash functions $h_1$ and $h_2$, so that the two choices for a
ball, or key, can be recomputed. The most obvious such application is
a hash table with chaining. In the classic hash table by chaining (see
e.g. \cite{knuth-vol3}), keys are inserted into a table using a hash
function to decide a bin. Collisions are handled by making a
linked list of all keys in the bin. If we insert $n$ keys into a table
of size $n$ and the hash function used is perfectly random, then the
longest chain has length $\Theta(\log n/\log\log n)$ whp. If we
instead use the two-choice paradigm, placing each key in the
shortest of the two chains selected by the
hash functions,  then the maximum time to search for a key in the
two selected chains is $\Theta(\log \log n)$ whp, but this assumes that the
hash functions $h_1$ and $h_2$ are truly random, which is not realistic.
No constant time implementation of these hash functions (using space less than the size of the universe) was known
to yield maximum load $O(\log \log n)$ with high probability (see the paragraph Alternatives below).

In this paper we consider the two-choice paradigm using simple
tabulation hashing dating back to Zobrist
\cite{zobrist70hashing}. With keys from a universe
$[u]=\{0,\ldots,u-1\}$, we view a key as partitioned into $c=O(1)$
characters from the alphabet $\Sigma=[u^{1/c}]$. For each character
position $i\in [c]$, we have an independent character table $T_i$
assigning random $r$-bit hash values to the characters. The $r$-bit
hash of a key $x=(x_0,\ldots x_{c-1})$ is computed as $\bigoplus_{i\in
  [c]} T_i(x_i)$ where $\oplus$ denotes bit-wise XOR. This
takes constant time. In \cite{patrascu11charhash}, with 8-bit
characters, this was found to be as efficient as two multiplications
over the same domain, e.g., 32 or 64 bit keys.  P\v{a}tra\c{s}cu and Thorup \cite{patrascu11charhash} have
shown that simple tabulation, which is not even $4$-independent in the
classic notion of Carter and Wegman \cite{carter77universal}, has many
desirable algorithmic properties. In particular, they show that the
error probability with cuckoo hashing \cite{pagh04cuckoo} is
$O(n^{-1/3})$, and they claim (no details provided) that their analysis can be extended
to give an $O(\log\log n)$ bound on the expected maximum load in the
two-choice paradigm. For the expected bound, they can assume that
cuckoo hashing doesn't fail. However, \cite{patrascu11charhash} present
concrete inputs for which cuckoo hashing fails with
probability $\Omega(n^{-1/3})$, so this approach does not
work for high probability bounds.

\paragraph{Results}
In this paper, we show that simple tabulation works almost as well as fully random
hashing when used in the two-choice paradigm. Similar to \cite{patrascu11charhash}, we consider the bipartite case where
$h_0$ and $h_1$ hash to different tables.
Our main result is that simple tabulation gives $O(\log\log n)$ maximum load with high
probability. This is the first result giving this guarantee of $O(1)$
evaluation time for any practical hash function.

\begin{theorem}\label{thm:lglgwhp}
Let $h_0$ and $h_1$ be two independent random simple tabulation hash functions.
If $m = O(n)$ balls are put into two tables of $n$ bins sequentially using the
two-choice paradigm with $h_0$ and $h_1$, then for any constant $\gamma>0$, the
maximum load of any bin is $O(\log\log n)$ with probability $1 -
O(n^{-\gamma})$.
\end{theorem}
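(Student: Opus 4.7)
My starting point is the classical witness-tree argument of Azar et al. In the bipartite setting, if a bin $b$ on (say) the $h_0$-side receives $\ge k$ balls, then for each such ball one can follow its alternative choice under $h_1$ to a bin whose load at the time of insertion was $\ge k-1$, and recurse. This produces a binary \emph{witness tree} of depth $k$ rooted at $b$ whose internal nodes alternate between $h_0$- and $h_1$-bins of decreasing load and whose edges are distinct balls. With $k = C\log\log n$ for a suitably large constant $C$, the event to rule out is the existence of such a tree with $2^{\Omega(k)}$ leaves.

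The first step is to enumerate \emph{shapes} for a witness tree: fixing the identities of the $O(2^k)$ balls and bins involved yields at most $n^{O(\log\log n)}$ possibilities, which is $n^{o(1)}$. For each fixed shape, I need to bound the probability that $h_0,h_1$ realize it. If $h_0,h_1$ were fully random, each ball contributes an independent factor of $1/n$ per hash choice, giving probability $n^{-\Omega(2^k)}$ per shape, which comfortably beats the shape count once $C$ is large enough. The goal is to recover essentially the same bound under simple tabulation.

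To achieve this I would import the \emph{position-character} / peeling technique of P\v{a}tra\c{s}cu and Thorup \cite{patrascu11charhash}. The set $S$ of keys appearing in the witness tree has size $\operatorname{polylog} n$, so typically each character value $\sigma\in\Sigma$ occurs in only a few keys of $S$. Whenever a key $x\in S$ has some coordinate $i$ whose value $x_i$ does not occur elsewhere in $S$, the table entry $T_i[x_i]$ is independent of the hashes of $S\setminus\{x\}$, making $h(x)$ uniform and independent given those other hashes. Peeling off such \emph{position-unique} keys iteratively recovers the independent factor of $1/n$ per hash value, matching the fully random calculation for each shape.

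The main obstacle is that the peeling can stop early on a \emph{core} of keys whose characters are so heavily shared that no coordinate is unique. I expect to need the same dichotomy that underlies \cite{patrascu11charhash}'s cuckoo analysis: either the core is small and can be handled by a direct union bound over the few relevant table entries, or the character-sharing pattern on $S$ is so constrained that the number of admissible shapes drops sharply and compensates for the loss of independence. Combining the peeling bound on the non-core keys with a separate bound for the dependent core, and then taking a union bound over shapes, over the choice of root bin, and over load thresholds $k \ge C\log\log n$, should yield probability $1-O(n^{-\gamma})$ that no bin exceeds load $C\log\log n$.
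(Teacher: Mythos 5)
Your plan stalls at exactly the point you flag as ``the main obstacle,'' and the paper avoids witness trees for the high-probability bound for precisely this reason. A preliminary issue: a witness tree of depth $k=C\log\log n$ has $2^k=(\log n)^C$ edges, so fixing the participating balls and bins gives $n^{\Theta(2^k)}=n^{(\log n)^{\Theta(1)}}$ shapes, not $n^{O(\log\log n)}$; the classical argument survives only because each shape has probability $n^{-\Omega(2^k)}$, and it is this per-shape probability that you must recover under simple tabulation. The real gap is the core. Your proposed dichotomy --- either the core is small, or the character-sharing pattern cuts the shape count enough to compensate --- is the entire content of the theorem and is false as stated. The paper's negative result in Section 3 exhibits key sets such as $[n/k^{c-1}]\times[k]^{c-1}$ for which, with probability $n^{-2(k-1)(c-1)}$ (a \emph{fixed} polynomial, independent of the target $\gamma$), all keys collapse into groups of $k^{c-1}$ keys with identical hash pairs; conditioned on that event a witness tree of depth $\lfloor k^{c-1}/2\rfloor\lg\lg n-O(1)$ appears with constant probability, its peeling core is essentially the whole tree, and the shape count does not compensate. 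So the probability lost to dependent cores can be an arbitrary fixed polynomial in $n$, which is why the constant in the $O(\log\log n)$ bound must grow with $\gamma$; your sketch contains no mechanism producing that trade-off, and a direct adaptation of the P\v{a}tra\c{s}cu--Thorup peeling (which the paper notes already fails with probability $\Omega(n^{-1/3})$ for cuckoo hashing) cannot supply it.

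The paper's route is structurally different. It first proves a purely combinatorial lemma: if some bin reaches load $k$, the hash graph contains a connected component with $x$ vertices and arboricity $a$ satisfying $a\lg x\ge k$. It then shows that with simple tabulation, whp every component has $O(\log n)$ vertices and arboricity $O(1)$ (constants depending on $\gamma$). The probability bound is obtained not by peeling position characters but by greedily extracting from each component a set $S$ of mutually \emph{independent} keys and invoking a counting lemma stating that the number of $s$-tuples admitting more than $(2k)^c s$ dependent keys is at most $n^{s-k/2}s^{O(1)}$; this is where the $n^{-\gamma}$ saving and the $\gamma$-dependence enter. The witness-tree (binomial-tree) argument does appear in the paper, but only for the \emph{expected} bound of Theorem~2, where the dependent-core cases yield failure probabilities like $O(1/\lg\lg n)$ --- exactly why that argument cannot give $1-O(n^{-\gamma})$ for arbitrary $\gamma$. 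To salvage your approach you would need, at minimum, the arboricity lemma or some substitute that converts load into a robust global graph statistic, plus the dependent-tuple counting lemma; neither is present in your outline.
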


We also prove the following result regarding the expected maximum load,
improving on the $O(\log\log n)$ expected bound of P\v{a}tra\c{s}cu and Thorup
\cite{patrascu11charhash}.

\begin{theorem}\label{thm:lglgnexpect}
Let $h_0$ and $h_1$ be two independent random simple tabulation hash functions.
If $m = O(n)$ balls are put into two tables of $n$ bins
sequentially using the two-choice paradigm with $h_0$ and $h_1$, then the
expected maximum load is at most $\lg\lg n + O(1)$.
\end{theorem}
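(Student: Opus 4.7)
The plan is to combine Theorem~\ref{thm:lglgwhp} with a refined layered induction in the style of Azar et al. Write $L$ for the maximum load and expand $\E[L] = \sum_{k\ge 1}\Pr[L\ge k]$. Splitting at $\lceil\lg\lg n\rceil$, the first $\lceil\lg\lg n\rceil$ terms contribute $\lg\lg n + O(1)$, so it suffices to show
\[
  \sum_{c\ge 1}\Pr\bigl[L \ge \lceil\lg\lg n\rceil + c\bigr] = O(1).
\]
For very large $c$, Theorem~\ref{thm:lglgwhp} already delivers: picking $\gamma$ sufficiently large gives $\Pr[L \ge C\lg\lg n] = O(n^{-\gamma})$, and since $L \le m = O(n)$ deterministically, the range $c \ge (C-1)\lg\lg n$ contributes $O(n^{1-\gamma}) = o(1)$ in total. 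Hence the task reduces to controlling the ``middle'' range, for which the natural target is $\Pr[L \ge \lg\lg n + c] \le O(2^{-c})$.

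For this I would run the classical layered induction. Let $B_i$ denote the set of bins whose final load is at least $i$, and define thresholds by $\beta_{i+1} = e\,\beta_i^2/n$ from a base case $\beta_{i_0} = n/\alpha$ obtained by a routine bound on loads of height $i_0 = O(1)$ in a single table. Then $\beta_{\lg\lg n + O(1)} < 1$, so one only needs $O(\lg\lg n)$ inductive steps. The step itself, conditional on $|B_i|\le\beta_i$, uses that any bin in $B_{i+1}$ receives some ball whose \emph{both} chosen bins lie in $B_i$ at the time of its placement; hence $|B_{i+1}|\le N_{i+1}$ where $N_{i+1}$ counts balls $x$ with $h_0(x),h_1(x)\in B_i$. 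In the fully random setting $\E[N_{i+1}]\le m(\beta_i/n)^2 = O(\beta_{i+1})$, and Chernoff gives concentration with failure probability $e^{-\Omega(\beta_{i+1})}$, which drives the induction.

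The main obstacle is that simple tabulation lacks full independence, so the Chernoff bound on $N_{i+1}$ does not apply off the shelf, and moreover $B_i$ is itself a function of $h_0,h_1$. I would address both issues using the Chernoff-style concentration results developed for simple tabulation by P\v{a}tra\c{s}cu and Thorup (which give moment and tail bounds for sums of $[0,1]$-valued functions of simple-tabulation-hashed keys), combined with a conditioning argument: freeze one of the two hash functions (or a well-chosen prefix of character tables) so that $B_i$ becomes an essentially fixed set, after which the residual randomness in $N_{i+1}$ decomposes as a sum controlled by those concentration bounds. With a per-layer failure probability of $e^{-\Omega(\beta_{i+1})} + n^{-\gamma}$, a union bound over the $O(\lg\lg n)$ layers gives $\Pr[L \ge \lg\lg n + c] = O(2^{-c}) + o(1/\lg\lg n)$, and summing over $c$ yields the target $\E[L] \le \lg\lg n + O(1)$.
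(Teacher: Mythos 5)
Your outer skeleton (expand $\E[L]=\sum_k\Pr[L\ge k]$, split at $\lceil\lg\lg n\rceil$, and kill the far tail with Theorem~\ref{thm:lglgwhp} plus the deterministic bound $L\le m$) matches the paper's, but the way you handle the middle range has a genuine gap. The entire difficulty of the theorem is concentrated in the step you defer to ``Chernoff-style concentration results for simple tabulation, combined with a conditioning argument.'' The layered induction needs a tail bound on $N_{i+1}=\abs{\{x: h_0(x),h_1(x)\in B_i\}}$ where $B_i$ is itself a function of $(h_0,h_1)$. The P\v{a}tra\c{s}cu--Thorup concentration bounds apply to sums $\sum_x f(x,h(x))$ for a value function $f$ fixed \emph{before} the hash function is drawn; they do not apply when the target set $B_i$ is chosen adaptively by the hash functions. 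Your proposed fix---freezing one hash function so that $B_i$ becomes ``essentially fixed''---does not achieve this: $B_i$ is determined by the full two-choice allocation, which depends on $h_0$ and $h_1$ jointly, so fixing $h_0$ neither fixes $B_i$ nor decouples it from the residual randomness in $h_1$; and even for a genuinely fixed set $B_i$, the event $\{h_0(x),h_1(x)\in B_i\}$ involves both functions. This is exactly why the paper remarks that the classic analysis goes through with $O(\log n)$-independence but not with simple tabulation (which is only $3$-independent), and why the negative Observation in Section~3 shows that fully-random-style tail behaviour genuinely fails here. As written, the proposal restates the problem rather than solving it.

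The paper takes a different route for the middle range that avoids concentration entirely. It shows (Observations~\ref{obs:Bk} and~\ref{obs:smalldoublecycle}) that a bin of load $k+1$ forces the hash graph to contain either a binomial tree $B_k$ or a double cycle on $O(k)$ edges; the double cycle is ruled out via the cuckoo-hashing analysis of \cite{patrascu11charhash}, and the binomial tree is ruled out by a union-bound/encoding argument over a \emph{pruned} subtree $T_{k,d}$ (keeping only nodes with at least $d$ children, so that leaves vastly outnumber internal nodes and the $1/d!$ symmetry factors per internal node beat the $m/n$ losses). The dependencies of simple tabulation are then handled not by concentration but by counting: Lemmas~\ref{lem:stdSave} and~\ref{lem:extraSave} bound the number of key tuples that admit many dependent keys, and the argument splits into cases according to how many edges at the root survive into the independent set $S$. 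Note also that the paper only needs the single bound $\Pr[L\ge\lg\lg n+r+1]=O(1/\lg\lg n)$, not geometric decay in $c$, since the $O(\lg\lg n)$ terms between $\lg\lg n+O(1)$ and $k_0\lg\lg n$ then contribute $O(1)$ in total; if you repair your argument, you may aim for this weaker target.
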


In contrast to the positive result of \Cref{thm:lglgwhp}, we also show that for any $k > 0$
there exists a set of $n$ keys such that the maximum load is $\ge k \lg
\lg n$ with probability $\Omega(n^{-\gamma})$ for some $\gamma >
0$. This shows that \Cref{thm:lglgwhp} is asymptotically tight and that
unlike the fully random case, $\lg\lg n + O(1)$ is not the right high
probability bound for the maximum load. 

\paragraph{Alternatives}
It is easy to see that $O(\log
 n)$-independence suffices for the classic analysis of the two-choice
 paradigm~\cite{azar99lglgn} placing $n$ balls in $\Omega(n)$
 bins. Several methods exist for computing such highly independent
 hash functions in constant time using space similar to ours
 \cite{CPT15:indep,siegel04hash,thorup13doubletab}, but these methods
 rely on a probabilistic construction of certain unbalanced constant
 degree expanders, and this only works with \emph{parameterized high
 probability in $u$}. By this we mean that to get probability $u^{-\gamma}$, we
 have to parameterize the construction of the hash function by $\gamma$, which
 results in an $\Theta(\gamma)$ evaluation time. In contrast the result of
 \Cref{thm:lglgwhp} works for any constant $\gamma$ without changing the hash
 function.
 Moreover, for $O(\log n)$-independence, the most efficient method is
 the double tabulation from \cite{thorup13doubletab}, and even if we
 are lucky that it works, it is still an order of magnitude slower
 than simple tabulation.

One could also imagine using the uniform hashing schemes from
\cite{Dahlgaard:2015,PP08}. Again, we have
the problem that the constructions only work with parameterized high
probability. The constructions from \cite{Dahlgaard:2015,PP08} are also worse
than simple tabulation in that they are more complicated and use $\Theta(n)$
space.

In a different direction, Woelfel \cite{Woelfel06} showed how to guarantee
maximum load of $\log\log n + O(1)$ using the hash family from
\cite{dietzfel03tabhash}. This is better than our result by a constant factor
and matches that of truly random hash functions. However, the result of
\cite{Woelfel06} only
works with parameterized high probability. The hash family used
in \cite{Woelfel06} is also slower and more complicated to implement than
simple tabulation. More specifically, for the most efficient
implementation of the scheme in \cite{Woelfel06}, Woelfel suggests using
the tabulation hashing from \cite{thorup12kwise} as a subroutine. However, the
tabulation hashing of \cite{thorup12kwise} is strictly more complicated than
simple tabulation, which we show here to work directly with non-parameterized
high probability.

Finally, it was recently shown by Reingold et
al.~\cite{Reingold14prgraphs} how to guarantee a maximum load of
$O(\log\log n)$ whp.~using the hash functions of
\cite{CRSW11}. These functions use a seed of $O(\log n
\log\log n)$ random bits and can be evaluated in $O((\log \log n)^2)$
time. With simple tabulation, we are not so concerned with the number
of random bits, but note that the character tables could be filled
with an $O(\log n)$-independent pseudo-random number generator. The main advantage
of simple tabulation is that we have constant evaluation time. As an
example using the result of \cite{Reingold14prgraphs} on the previously
described example of a hash table with chaining would give $O((\log\log n)^2)$
lookup time instead of $O(\log\log n)$.

\paragraph{Techniques}
In order to show \Cref{thm:lglgwhp}, we first show a structural lemma,
    showing that if a bin has high load, the underlying hash graph must either
    have a large component or large arboricity. This result is purely
    combinatorial and holds for any choice of
    the hash functions $h_0$ and $h_1$. We believe that this result is of
    independent interest, and could be useful in proving similar results for
    other hash functions. In order to show \Cref{thm:lglgnexpect}, we use
    \Cref{thm:lglgwhp} combined with
the standard idea of bounding the
probability of a large binomial tree occurring in the hash graph. An
important trick in our analysis for the expected case is to only consider a pruned binomial tree
where all degrees are large.

It remains a major open problem what happens for $m\gg n$ balls, and it does
not seem like current techniques alone generalize to this case
without the assumption that the hash functions are fully random.
We do not know of any practical hash functions that guarantee that the
difference between the maximum and the average load is $\lg \lg n + O(1)$ with
high probability when $m\gg n$---not even $\log n$-independence appears
to suffice for
this case.

\paragraph{Structure of the paper}
In \Cref{sec:prelims} we introduce well-known results and notation used
throughout the paper. In \Cref{sec:whp}, we first show that we cannot hope to
get $(1+o(1))\lg\lg n + O(1)$ maximum load whp for simple tabulation. We then
show a structural lemma regarding the arboricity of the hash graph and maximum
load of any bin. Finally, we use this lemma to prove \Cref{thm:lglgwhp}. In
\Cref{sec:expect} we prove \Cref{thm:lglgnexpect}. The proofs of
\Cref{sec:whp,sec:expect} rely heavily on a few structural lemmas regarding the
dependencies of keys with simple tabulation. The proofs of these lemmas are
included in \Cref{sec:lemmaproofs}.

\section{Preliminaries}\label{sec:prelims}

\subsection{Simple Tabulation}
Let us briefly review simple tabulation hashing. The goal is to hash keys from
some universe $[u] = \{0,\ldots, u-1\}$ into some range $\calR = [2^r]$
(i.e.~hash values are $r$ bit numbers for convenience). In tabulation hashing
we view a key $x\in [u]$ as a vector of $c>1$ \emph{characters} from the
alphabet $\Sigma = [u^{1/c}]$, i.e.~$x = (x_0,\ldots,x_{c-1})\in \Sigma^c$. We
generally assume $c$ to be a small constant.

In simple tabulation hashing we initialize $c$ independent random tables
$T_0,\ldots,T_{c-1} : \Sigma\to\calR$. The hash value $h(x)$ is then computed
as
\begin{equation}
    h(x) = \bigoplus_{i\in [c]} T_i[x_i]\enspace ,
\end{equation}
where $\oplus$ denotes the bitwise XOR operation. This is a well known scheme
dating back to Zobrist \cite{zobrist70hashing}. Simple tabulation is known to
be just $3$-independent, but it was shown in \cite{patrascu11charhash} to have
much more powerful properties than this suggests. This includes fourth moment
bounds, Chernoff bounds when distributing balls into many bins, and
random graph properties necessary for cuckoo hashing.

\paragraph{Notation}
We will now recall some of the notation used in \cite{patrascu11charhash}. Let
$S\subseteq [u]$ be a set of keys. Denote by $\pi(S,i)$ the projection of $S$
on the $i$th character, i.e.~$\pi(S,i) = \{x_i | x\in S\}$. We also use this
notation for keys, so $\pi((x_0,\ldots,x_{c-1}),i) = x_i$. A \emph{position
character} is an element of $[c]\times \Sigma$. Under this definition a key
$x\in[u]$ can be viewed as a set of $c$ position characters
$\{(0,x_0),\ldots,(c-1,x_{c-1})\}$. Furthermore we assume that $h$ is defined on
position characters as $h((i,\alpha)) = T_i[\alpha]$. This definition extends
to sets of position characters in a natural way by taking the XOR over the hash
of each position character.

\paragraph{Dependent keys}\label{sec:classify}
That simple tabulation is not $4$-independent
implies that there exists keys
$x_1,x_2,x_3,x_4$ such that for any choice of $h$, $h(x_1)$ is dependent on
$h(x_2)$, $h(x_3)$, $h(x_4)$. However, contrary to e.g.~polynomial hashing this
is not the case for all $4$-tuples. Such key dependences in simple tabulation
can be completely classified. We will state this as the following lemma, first
observed in \cite{thorup12kwise}.

\begin{lemma}[Thorup and Zhang]\label{lem:keydep}
    Let $x_1,\ldots,x_k$ be keys in $[u]^k$. If $x_1,\ldots,x_k$ are
    dependent, then there exists an $I \subseteq \{1,\ldots,k\}$ such that
    each position character of $(x_i)_{i\in I}$ appears an even number of times.

    Conversely, if each position character of $x_1,\ldots,x_k$ appears an even
    number of times, then $x_1,\ldots,x_k$ are dependent and, for any $h$,
    \[
        \bigoplus_{i=1}^k h(x_i) = 0\ .
    \]
\end{lemma}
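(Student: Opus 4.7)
The plan is to recast the statement as a linear-algebra fact over $\mathbb{F}_2$. To each key $x_i$ I associate the indicator vector $v_i \in \mathbb{F}_2^{c|\Sigma|}$ whose entry at position character $(j,\alpha)$ is $1$ iff $x_{i,j} = \alpha$ (so $v_i$ has exactly $c$ nonzero coordinates). The $rc|\Sigma|$ bits comprising $T_0, \ldots, T_{c-1}$ are independent uniform $\mathbb{F}_2$-variables, and the identity $h(x_i) = \bigxor_{(j,\alpha)} v_{i,(j,\alpha)} \cdot T_j[\alpha]$ means each $h(x_i)$ is an $\mathbb{F}_2$-linear image of those bits. The key observation is that every position character appears an even number of times in $(x_i)_{i \in I}$ if and only if $\sum_{i \in I} v_i = 0$ in $\mathbb{F}_2^{c|\Sigma|}$, which is precisely the combinatorial criterion of the lemma.

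The converse direction of the lemma is then immediate: if $I$ satisfies the parity condition, then
\[
\bigxor_{i \in I} h(x_i) \;=\; \bigxor_{(j,\alpha)} \Bigl(\sum_{i\in I} v_{i,(j,\alpha)} \bmod 2\Bigr) T_j[\alpha] \;=\; 0,
\]
which holds pointwise for every choice of the tables $T_0,\ldots,T_{c-1}$. This simultaneously yields the stated XOR identity and a deterministic linear relation among the hash values, certifying dependence.

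For the forward direction I argue the contrapositive: assume no non-empty subset $I$ satisfies the parity condition, equivalently the vectors $v_1, \ldots, v_k$ are $\mathbb{F}_2$-linearly independent. Extend $v_1, \ldots, v_k$ to a basis of $\mathbb{F}_2^{c|\Sigma|}$ and apply the induced invertible change of coordinates to the independent uniform bits of $T$; doing this bit-by-bit over the $r$ bits of $\calR$, the joint distribution of $(h(x_1), \ldots, h(x_k))$ is uniform on $\calR^k$, so the hash values are independent. The main obstacle is really just this last implication: turning $\mathbb{F}_2$-linear independence of the coefficient vectors $v_i$ into statistical independence of the $\calR$-valued hash outputs. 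However, since every bit of $\calR$ is governed by the same $\mathbb{F}_2$-linear map, this reduces to the standard fact that a surjective $\mathbb{F}_2$-linear map pushes the uniform distribution on the domain to the uniform distribution on the codomain, and no deeper machinery is required.
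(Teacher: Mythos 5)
Your proof is correct: the translation to indicator vectors over $\mathbb{F}_2$, the immediate verification of the converse by linearity, and the contrapositive of the forward direction via surjectivity of the linear map $T\mapsto(\langle v_1,T\rangle,\ldots,\langle v_k,T\rangle)$ applied independently to each of the $r$ output bits are all sound. The paper itself imports this lemma from Thorup and Zhang without proof, and your argument is precisely the standard one given there, so there is nothing to compare beyond noting agreement.
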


This means that if a set of keys $(x_i)_{i\in I}$ has symmetric difference
$\emptyset$, then it is dependent. Throughout the paper, we will denote the symmetric difference between the position characters of $\{x_i\}_{i\in I}$ as
$\bigoplus_{i \in I} x_i$.

In \cite{Dahlgaard:2015}, the following lemma was shown:
\begin{lemma}[\cite{Dahlgaard:2015}]
	\label{zeroSum}
	Let $X \subset U$ be a subset with $n$ elements. The number of $2t$-tuples $(x_1, \ldots, x_{2t}) \in X^{2t}$ such that
	\[
	x_1 \oplus \cdots \oplus x_{2t} = \emptyset
	\]
	is at most $((2t-1)!!)^c n^t$. (Where $(2t-1)!! = (2t-1)(2t-3) \cdots 3 \cdot 1$)
\end{lemma}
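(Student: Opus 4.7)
The plan is to bound the count of $\emptyset$-XOR tuples by summing over choices of per-position perfect matchings, and then for each fixed matching tuple to bound the count using Shearer's inequality.

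First I would observe that if $(x_1, \ldots, x_{2t}) \in X^{2t}$ satisfies $x_1 \oplus \cdots \oplus x_{2t} = \emptyset$, then for each position $i \in [c]$ the multiset $(\pi(x_1, i), \ldots, \pi(x_{2t}, i))$ has every character with even multiplicity. Hence one may choose a perfect matching $M_i$ on $\{1, \ldots, 2t\}$ whose paired indices share their $i$-th character. Since there are $(2t-1)!!$ perfect matchings on $\{1, \ldots, 2t\}$, overcounting each XOR-zero tuple by its compatible matching tuples gives
\[
\#\{\text{tuples with XOR }\emptyset\} \;\leq\; \sum_{M_0,\ldots,M_{c-1}} N(M_0,\ldots,M_{c-1}),
\]
where $N(M_0,\ldots,M_{c-1})$ counts tuples $(x_j) \in X^{2t}$ satisfying $\pi(x_j, i) = \pi(x_k, i)$ whenever $(j,k) \in M_i$. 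Since the outer sum has $((2t-1)!!)^c$ terms, it remains only to prove $N(M_0,\ldots,M_{c-1}) \leq n^t$ for every fixed matching tuple.

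To bound $N$ for fixed matchings, I would reparametrize. Let $P_i(j) \in [t]$ denote the $M_i$-class of $j$, and introduce a free character $Z_{i,k} \in \Sigma$ for each $(i, k) \in [c] \times [t]$. The tuples respecting all the matchings correspond bijectively to assignments $Z \in \Sigma^{[c] \times [t]}$ such that for every $j \in [2t]$ the element $x_j \in \Sigma^c$ defined by $\pi(x_j, i) = Z_{i, P_i(j)}$ lies in $X$. Write $S \subseteq \Sigma^{[c] \times [t]}$ for this set of valid assignments, so $N = |S|$. For each $j$ set $A_j = \{(i, P_i(j)) : i \in [c]\} \subseteq [c] \times [t]$, a set of size $c$. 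The family $(A_j)_{j=1}^{2t}$ covers every coordinate $(i, k)$ with multiplicity exactly $2$, namely by the two indices forming the $k$-th class of $M_i$.

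The key step is then Shearer's inequality for set sizes: when $(A_j)$ covers every coordinate at least twice, $|S|^2 \leq \prod_{j=1}^{2t} |\pi_{A_j}(S)|$, where $\pi_{A_j}(S)$ is the projection of $S$ to the coordinates of $A_j$. Each projection consists of the possible values of $x_j$, all of which lie in $X$, so $|\pi_{A_j}(S)| \leq n$ and hence $|S|^2 \leq n^{2t}$, giving $N \leq n^t$ as desired. The main obstacle I expect is proving the bound $N(M_0, \ldots, M_{c-1}) \leq n^t$ for fixed matchings: a naive induction on $c$ via splitting on the $0$-th character fails, because the XOR being $\emptyset$ overall does not force the subtuple indexed by $\{j : \pi(x_j, 0) = a\}$ to XOR to $\emptyset$ on the remaining coordinates. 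Recognizing that $(A_j)_{j=1}^{2t}$ forms a double cover of the $ct$-coordinate ground set is the conceptual crux that makes Shearer's inequality immediately yield the tight bound.
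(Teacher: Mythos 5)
Your proof is correct, but note that the paper does not actually prove \Cref{zeroSum}; it is imported from \cite{Dahlgaard:2015}, whose argument is reflected here in the more general \Cref{zeroSumProvable} with the $((2t-1)!!)^c\prod_i\sqrt{|A_i|}$ bound. That route also begins by matching up, position by position, the indices sharing a character (accounting for the $((2t-1)!!)^c$ factor), but then controls the count for a fixed matching tuple by an induction over the $c$ character positions, with the $\sqrt{|A_i|}$ factors arising from a Cauchy--Schwarz step that is needed precisely because, as you observe, the subtuples obtained by splitting on one position need not themselves XOR to $\emptyset$. Your alternative is to sidestep that induction entirely: after reparametrizing a matching-consistent tuple by the free characters $Z_{i,k}$, the sets $A_j=\{(i,P_i(j)):i\in[c]\}$ form an exact double cover of $[c]\times[t]$, and since each projection $\pi_{A_j}(S)$ embeds into $X$, Shearer's inequality gives $|S|^2\le n^{2t}$ in one stroke. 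All steps check out: every XOR-zero tuple is compatible with at least one matching tuple, there are $((2t-1)!!)^c$ such tuples, and the correspondence between consistent tuples and assignments $Z$ is indeed a bijection. What your approach buys is a short, self-contained proof of \Cref{zeroSum} with no induction on $c$; what it gives up is the finer-grained generalization of \Cref{zeroSumProvable} to distinct sets $A_1,\ldots,A_{2t}$ with the $\prod_i\sqrt{|A_i|}$ dependence, which the paper genuinely needs (e.g.\ in the proofs of \Cref{lem:extraSave,whpSave}), though your Shearer argument extends to that setting as well by bounding $|\pi_{A_j}(S)|\le|A_j|$.
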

In order to prove the main results of this paper, we prove several
similar lemmas retaining to the number of tuples with a certain number
of dependent keys.
\subsection{Two Choices}
In the two-choice paradigm, we are distributing $m$ balls (keys) into $n$ bins.
The keys arrive sequentially, and we associate with each key $x$ two random bins
$h_0(x)$ and $h_1(x)$ according to hash functions $h_0$ and $h_1$. When placing
a ball, we pick the bin with the fewest balls in it, breaking ties arbitrarily.
If $h_0$ and $h_1$ are perfectly random hash functions the maximum load of any
bin is known to be $\lg\lg n + O(1)$ whp.~if $m = O(n)$ \cite{azar99lglgn}.

\begin{definition}\label{def:cuckoo_graph}
    Given hash functions $h_0,h_1$ (as above), let the \emph{hash
    graph} 
    denote the 
    graph with bins as vertices and an edge between
    $(h_0(x),h_1(x))$ for each $x\in S$.
\end{definition}

In this paper, we assume that $h_0$ and $h_1$ map to two disjoint tables, and
the graph can thus be assumed to be bipartite.
This is a standard assumption, see e.g. \cite{patrascu11charhash},
and is actually preferable in the distributed setting.
We note that the proofs can easily be changed such that they also hold when
$h_0$ and $h_1$ map to the same table.

\begin{definition}\label{def:time}
The hash-graph $G_m$ may be decomposed into a series of nested subgraphs $G_0, G_1, \ldots, G_j, \ldots , G_m$ with edge-set $\emptyset\subset
\ldots \subset \{(h_0(x_i),h_1(x_i))\}_{i\le j}\subset \ldots \subset
\{((h_0(x_i),h_1(x_i))\}_{x_i\in S}$, which we will call the hash-graph
\emph{at the time} $0,\ldots,m$. Similarly, the load of a vertex at the time
$j$ is well-defined.
\end{definition}

\paragraph{Cuckoo hashing}
Similar to the power of $2$ choice hashing, another scheme using two hash functions is \emph{cuckoo hashing}
\cite{pagh04cuckoo}. In cuckoo hashing we wish to place each ball in one of two
random locations without collisions. This is possible if and only
if no component in the hash graph has more than one cycle.
P\v{a}tra\c{s}cu and Thorup \cite[Thm.~1.2]{patrascu11charhash} proved
that with simple tabulation, this obstruction happens with probability $O(n^{-1/3})$, and we
shall use this in our analysis of the expected maximum load for \cref{thm:lglgnexpect}.

\subsection{Graph terminology} \label{sub:binom}
The \emph{binomial tree} $B_0$ of order $0$ is a single node. The binomial
tree $B_k$ of order $k$ is a root node, which children are binomial trees
of order $B_0,\ldots,B_{k-1}$. A binomial tree of order $k$ has $2^k$ nodes
and height $k$.

The \emph{arboricity} of a graph $G$ is the minimum number of spanning forests
needed to cover all the edges of the graph. As shown by Nash-Williams~\cite{nash64decomp,nash61edge},
the arboricity of a graph $G$ equals
\[
    \max \set{\ceil{\frac{\abs{E_s}}{\abs{V_s}-1}}\mid (V_s,E_s) \textnormal{ is a subgraph of }G }
\]

\section{Maximum load with high probability}\label{sec:whp}

This section is dedicated to proving \cref{thm:lglgwhp}.
The main idea of the proof is to show that a hash graph resulting in high
maximum load must have a component which is either large or has high
arboricity. We then show that each of these cases is very unlikely.

As a negative result, we will first observe, that we cannot prove that the
maximum load is $\lg \lg n + O(1)$ or even $(1+o(1))\lg \lg n$ whp.~when
using simple tabulation.

\begin{Observation}
	Given $k = O(1)$, there exists an ordered set $S$ consisting of $n$ keys, such that when
	they are distributed into $n$ bins using hash values from simple tabulation
	the max load is
	$\ge \floor{k^{c-1}/2}\lg\lg n - O(1)$ with probability $\Omega(n^{-2(k-1)(c-1)})$.
\end{Observation}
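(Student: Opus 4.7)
The plan is to construct $S$ as $m = n/k^{c-1}$ disjoint \emph{clumps} of $k^{c-1}$ keys each, arrange a collapse event $E$ under which every clump hashes to a single bin pair, and then invoke the classical two-choice lower bound on the resulting fully random instance over super-balls.

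Fix sets $A_1,\ldots,A_{c-1} \subseteq \Sigma$ of size $k$ each and distinct characters $\alpha_1,\ldots,\alpha_m \in \Sigma$, and define
\[
  S \;=\; \bigcup_{j=1}^{m}\bigl\{(\alpha_j,x_1,\ldots,x_{c-1}) : x_i \in A_i\bigr\},
\]
with the $k^{c-1}$ keys of clump $j$ arriving consecutively. Let $E$ be the event that, for both $b \in \{0,1\}$ and every $i \in \{1,\ldots,c-1\}$, the $k$ values $\{T_i^{(b)}[a] : a \in A_i\}$ coincide; these $2(c-1)$ events involve disjoint table entries and are therefore independent, each of probability $n^{-(k-1)}$, so $\Pr[E] = n^{-2(c-1)(k-1)}$. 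Under $E$, every key of the $j$-th clump hashes to $T_0^{(b)}[\alpha_j] \oplus C^{(b)}$ under $h_b$ for a clump-independent constant $C^{(b)}$; since the $\alpha_j$ are distinct and $T_0^{(b)}$ is independent of the entries frozen by $E$, the bin pairs $(u_j,v_j)$ are mutually independent and uniform in $[n]^2$ under the conditioning.

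Conditional on $E$, the super-ball hash graph is a uniform bipartite random multigraph with $m = \Theta(n)$ edges and $n$ vertices on each side. A first-moment calculation shows that the expected number of binomial trees $B_L$ rooted at some bin is $\Theta\bigl(n \cdot (k^{c-1})^{-(2^L-1)}\bigr)$, which is $\Theta(1)$ at $L = \lg\lg n - O(1)$; the usual second-moment bound then gives existence with probability $\Omega(1)$. Processing the super-balls on a witness $B_L$ in a leaves-first linear extension, an induction on depth shows that each clump is processed with both endpoints at equal load $d\lfloor k^{c-1}/2 \rfloor$ and leaves them at equal load $(d+1)\lfloor k^{c-1}/2 \rfloor$; the root thus ends with load $L\lfloor k^{c-1}/2 \rfloor = \lfloor k^{c-1}/2 \rfloor \lg\lg n - O(1)$, giving overall probability $\Pr[E] \cdot \Omega(1) = \Omega(n^{-2(k-1)(c-1)})$.

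The technical step most worth scrutinizing is the coupling between the fixed ordering of $S$ and the random witness tree: leaves-first processing of $B_L$'s edges is not automatic for a pre-chosen ordering of clumps. I would sidestep the issue by replaying Azar et al.'s layered recurrence $\beta_{i+1} \gtrsim \beta_i^2/n$ directly on the super-ball process, where $\beta_i$ counts bins with final actual load at least $i\lfloor k^{c-1}/2 \rfloor$; this analysis is order-insensitive, and its inductive step relies only on the fact that a super-ball $(u,v)$ with $\min(f_u,f_v) \geq i\lfloor k^{c-1}/2 \rfloor$ at processing time satisfies $\max(f_u',f_v') \geq (i+1)\lfloor k^{c-1}/2 \rfloor$ afterward, since $f_u' + f_v' = f_u + f_v + k^{c-1}$.
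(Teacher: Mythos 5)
Your proposal is correct and follows essentially the same route as the paper: the same key set (a product structure collapsing to $n/k^{c-1}$ independent uniform ``super-balls'' of multiplicity $k^{c-1}$ under a collapse event of probability $n^{-2(k-1)(c-1)}$), the same observation that each clump of $k^{c-1}$ consecutive copies raises the maximum of its two bins by at least $\lfloor k^{c-1}/2\rfloor$ above their common starting level, and the same appeal to the Azar et al.\ lower bound for the $\lg\lg n - O(1)$ factor. Your fallback to the order-insensitive layered recurrence is exactly the ``along the lines of [Azar et al., Thm.~3.2]'' step the paper cites without detail.
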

\begin{proof}
	Consider now the set of keys $[n/k^{c-1}] \times [k]^{c-1}$ consisting of
	$n$ keys.
	For each of the positions $i = 1,\ldots,c-1$ the probability that all the
	position characters on position $i$ hash to the same value is $n^{-k+1}$.
	So with probability $n^{-(k-1)(c-1)}$ this happens for all positions
	$i = 1,\ldots,c-1$. This happens for both hash functions with probability
	$n^{-2(k-1)(c-1)}$.
	In this case $h_l(x) = h_l(x_0x_1 \ldots x_{c-1}) = h_l(x_{c-1}) \xor h_l(x_0\ldots x_{c-2})$
	is only dependent on $h_l(x_{c-1}), l \in \set{0,1}$.
	Order the keys lexicographically and insert them into the bins.
	If $n/k^{c-1} = \Omega(n)$ balls are distributed independently and uniformly
	at random to $n$ bins the maximum load would be $\lg \lg n - O(1)$ with
	probability $\Omega(1)$. (This can be proved along the lines of \cite[Thm. 3.2]{azar99lglgn}.)
	If we had exactly $2\floor{k^{c-1}/2}$ copies of $n/k^{c-1}$ independent
	and random keys the maximum load would be at least $\floor{k^{c-1}/2}$ times
	larger than if we had had $n/k^{c-1}$ independent
	and random keys. The latter is at least $\lg \lg n - O(1)$ with probability
	$\Omega(1)$.

	Since there are $k^{c-1} \ge 2\floor{k^{c-1}/2}$ copies of independent and
	uniformly random hash values we conclude that the maximum load is at
	least $\floor{k^{c-1}/2}(\lg \lg n - O(1)) = \floor{k^{c-1}/2}\lg \lg n - O(1)$
	with probability $\Omega(1)$ under the assumption that $h_l(x_0\ldots x_{c-2})$
	is constant for any $(x_0,\ldots,x_{c-2}) \in [k]^{c-1}, l \in \set{0,1}$.
	Since the latter happens with probability $n^{-2(k-1)(c-1)}$ the proof is
    finished.\qed
\end{proof}

We will now show that a series of insertions inducing a hash graph with low
arboricity and small components cannot cause a too big maximum load. Note that
this is the case for any hash functions and not just simple tabulation.

\begin{lemma}
	\label{arboricityLemma}
	Consider the process of placing some balls into bins with the two choice
	paradigm, and assume that some bin gets load $k$. Then there exists a
	connected component in the corresponding hash graph with $x$ nodes and
	arboricity $a$ such that:
	\[
	a \lg x \ge k
	\]
\end{lemma}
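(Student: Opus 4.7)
Plan: I would proceed by induction on $k$, strengthening the conclusion slightly for inductive purposes. Specifically, I would show: \emph{for any bin $v$ with load $k$ in the hash graph, there exists a connected subgraph $H$ containing $v$ with $a(H) \cdot \lg|V(H)| \ge k$}, where $a(H)$ is the integer arboricity defined via Nash--Williams. Since any such $H$ is contained in the connected component $C$ of $v$, with $|V(C)| \ge |V(H)|$ and $a(C) \ge a(H)$, this immediately yields the lemma.

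The base $k = 1$ is trivial: take $H$ to be any edge incident to $v$, giving $a(H) = 1$ and $|V(H)| = 2$. For the inductive step at $k + 1$, let $x$ be the $(k{+}1)$-th ball placed at $v$ and let $v'$ be its alternative. By the two-choice rule, both $v$ and $v'$ had load at least $k$ just before $x$'s arrival. Applying the inductive hypothesis at that earlier time yields connected subgraphs $H_L \ni v$ and $H_R \ni v'$ of the earlier hash graph with $a(H_i)\cdot \lg|V(H_i)| \ge k$ for $i \in \{L, R\}$. Form $H := H_L \cup H_R \cup \{x\}$, which is connected through $x$ and contains $v$.

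The crux is to show $a(H)\cdot \lg|V(H)| \ge k+1$, and I would split into two regimes. When $H_L$ and $H_R$ are (nearly) vertex-disjoint, the vertex count essentially doubles: WLOG if $|V(H_L)| \le |V(H_R)|$, then $|V(H)| \ge 2|V(H_L)|$, so $\lg|V(H)| \ge \lg|V(H_L)| + 1$. Using $a(H) \ge a(H_L) \ge 1$, we get
\[
a(H)\,\lg|V(H)| \;\ge\; a(H_L)\,(\lg|V(H_L)| + 1) \;\ge\; k + a(H_L) \;\ge\; k+1.
\]
In the opposite regime where $H_L$ and $H_R$ overlap so much that $|V(H)|$ does not grow, the gain must come from the new edge $x$ raising the arboricity. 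If the max-density sub-subgraph of $H_L$ (witnessing $a(H_L)$ via Nash--Williams) contains both $v$ and $v'$, then adding $x$ to it strictly raises its density by $1/(|V|-1)$; by the integer ceiling in Nash--Williams the arboricity jumps to at least $a(H_L) + 1$, yielding
\[
a(H)\,\lg|V(H)| \;\ge\; (a(H_L) + 1)\,\lg|V(H_L)| \;\ge\; k + \lg|V(H_L)| \;\ge\; k+1,
\]
using $|V(H_L)| \ge 2$.

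The main obstacle I anticipate is the overlap case when the max-density sub-subgraph of $H_L$ does not contain both endpoints of $x$. Resolving this likely requires strengthening the inductive hypothesis to track more structural information---for example, requiring $v$ to lie in the max-density sub-subgraph of $H$, or choosing $H_L, H_R$ so their dense cores interact correctly with the new edge $x$. Partial-overlap scenarios would be handled by a unified argument that interpolates between the two extreme regimes described above.
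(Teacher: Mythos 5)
Your proposal has a genuine gap, and it sits exactly where you flag it. First, a concrete error: in your ``full overlap'' regime you claim that adding the single edge $x$ to the max-density subgraph witnessing $a(H_L)$ forces the arboricity up to $a(H_L)+1$. It does not. The density only rises by $1/(|V|-1)$, and the Nash--Williams ceiling $\ceil{|E_s|/(|V_s|-1)}$ is unchanged unless the old density was already within $1/(|V|-1)$ of the integer $a(H_L)$. In general one must add on the order of $|V|-1$ new edges to a subgraph before its arboricity increases by one, so a single new ball cannot pay for the $+1$ in the product $a\lg|V|$ through the arboricity factor alone. Second, the partial-overlap case --- where $H_L$ and $H_R$ share some but not all vertices, so that neither $\lg|V(H)|$ gains a full $+1$ nor does $a(H)$ increase --- is precisely the hard case, and ``a unified argument that interpolates'' is not an argument. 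A one-step induction that tracks only the pair $(a(H), |V(H)|)$ for a single subgraph $H$ discards the information needed to make this trade-off work, because the gain from a single insertion is amortized across many insertions.

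The paper avoids this by a global, non-inductive accounting. Starting from the loaded bin $v$, it builds nested vertex sets $V_k\subset V_{k-1}\subset\cdots\subset V_0$, where $E_l$ consists of the $(l{+}1)$-st ball of each bin in $V_{l+1}$ and $V_l$ is the set of endpoints of $E_l$; every bin in $V_l$ has load at least $l$, and $|E_l|=|V_{l+1}|$. Letting $a$ bound the density $\bigl(|E_l|+\cdots+|E_{k-1}|\bigr)/(|V_l|-1)$ over all levels simultaneously, one gets the recursion $|V_l|-1\ge\bigl(|V_{l+1}|+\cdots+|V_k|\bigr)/a$, hence $|V_0|\ge(1+1/a)^k$ and $a\lg|V_0|\ge k\lg(1+1/a)^a\ge k$. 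The key point is that the sum over \emph{all} levels appears in the denominator constraint, so small per-level gains in either size or density accumulate correctly --- this is the amortization your level-by-level induction is missing. If you want to salvage your approach, you would have to strengthen the inductive hypothesis to carry essentially this multi-level edge/vertex bookkeeping, at which point you have reconstructed the paper's proof.
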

\begin{proof}
	Let $v$ be the node in the hash graph corresponding to the bin with load $k$.
	Let $V_k = \set{v}, E_k = \emptyset$ and define $E_l,V_l$, for
	$l = k-1, k-2,\ldots , 0$, in the
	following way: For each bin $b$ of $V_{l+1}$, add the edge corresponding to the
	$l+1$st ball landing in $b$ to the set $E_{l}$. Define $V_{l}$ to
	be the endpoints of the edges in $E_{l}$
	(see
	\cref{arboricityFigure} for a visualization). 
	\begin{figure}[htbp]
		\centering
		\includegraphics[width=.5\textwidth]{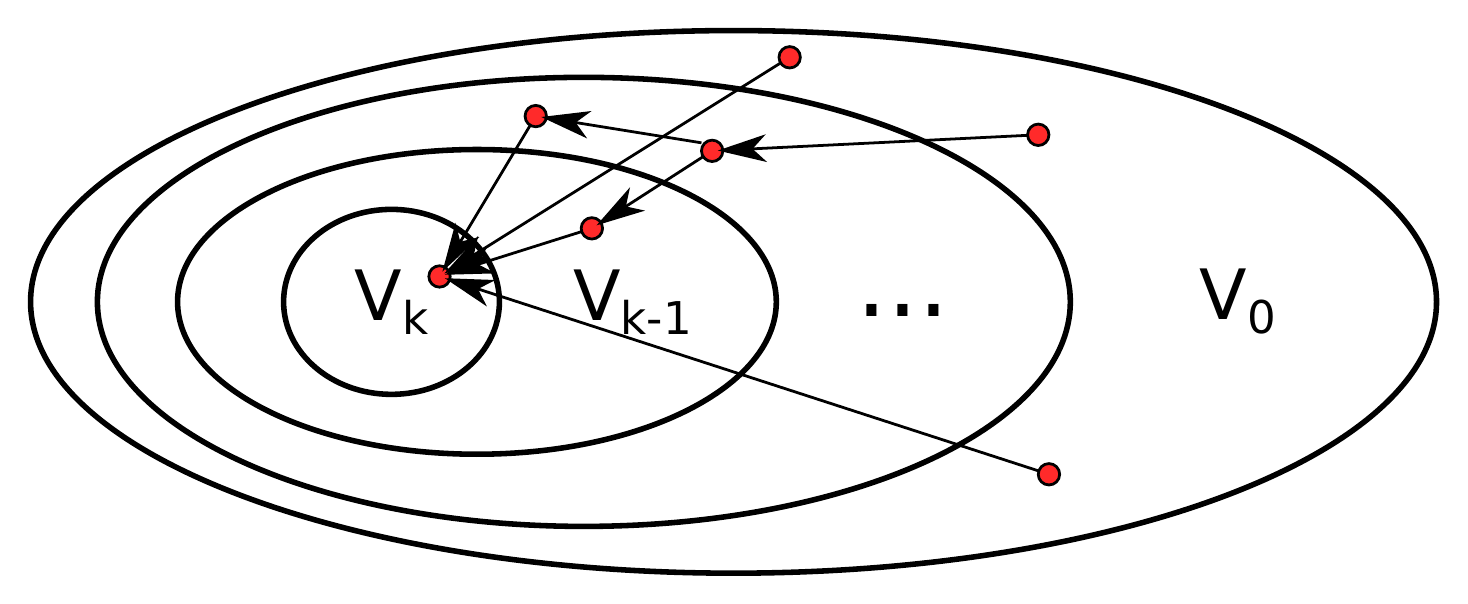}
		\caption{A visualisation of the sets $V_0,\ldots,V_k$.}
		\label{arboricityFigure}
	\end{figure}

	It is clear, that each bin of $V_l$ must have a load of at least $l$.
	Note that the definition implies that $\abs{E_l} = \abs{V_{l+1}}$ and
	$V_k \subset V_{k-1} \subset \ldots \subset V_0$. For each $l \in [k]$, let \emph{the $l$'th load-graph of $v$} denote
	the subgraph $\left ( V_l, E_l \cup E_{l+1} \cup \ldots \cup E_{k-1} \right )$.
	Let $a_l$ be defined as the following lower bound on the arboricity of
	this subgraph:
	\[
	a_l = \ceil{\frac{\abs{E_l} + \ldots + \abs{E_{k-1}}}{\abs{V_l}-1}}
	\]
	Let $a = \max_{l \in [k]} a_l$, then $a$ is a lower bound on the
	arboricity of $\left ( V_0, E_0 \cup \ldots \cup E_{k-1} \right )$. Now
	note that for each $l \in [k]$:
	\[
	\frac{\abs{E_l} + \ldots + \abs{E_{k-1}}}{\abs{V_l}-1} \le a
	\]
	Since $\abs{E_l} = \abs{V_{l+1}}$ for each $l \in [k]$ this means that:
	\[
	\abs{V_l}-1 \ge \frac{\abs{V_{l+1}} + \ldots + \abs{V_k}}{a}
	\]
	By an easy induction $\abs{V_l} \ge \left(1 + \frac{1}{a}\right)^{k-l}$,
	and therefore $\abs{V_0} \ge \left(1 + \frac{1}{a}\right)^k$. The connected
	component that contains $v$ contains at least $\abs{V_0}$ nodes, has
	arboricity $\ge a$, and:
	\[
	a \lg \abs{V_0} \ge
	a \lg \left ( 1 + \frac{1}{a} \right )^k =
	k \lg \left ( 1 + \frac{1}{a} \right )^a \ge
	k
	\]\qed
\end{proof}

Our approach is now to 1) fixing the hash graph, 2) observe that the hash
graph must have a component with a certain structure (in this case high
arboricity or big component), 3) Find a large set of independent keys, $S$ in
this component, 4) bound the probability that such a component could have
occurred using $S$.

In order to perform step 4 above we will need a way to bound the number of
sets, $S$, which have many dependent keys. This is captured by
the following lemma, which is proved in \Cref{sec:lemmaproofs}.
\begin{lemma}
	\label{whpSave}
	Let $X \subset U$ be a subset with $n$ elements and fix $k =O(1)$ and $s$
	such that $k s^{kc} < \sqrt{n}$. The number of $s$-tuples
	$(x_1,\ldots,x_s) \in X^s$ for which there exists distinct
	$y_1,\ldots,y_{(2k)^c s+1} \in X$, which are dependent on $x_1,\ldots,x_s$
	is no more than:
	\[
	n^{s-k/2} s^{O(1)}
	\]
	where the constant in the $O$-notation is dependent on $k$.
\end{lemma}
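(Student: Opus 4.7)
The plan is to bound the count using a pigeonhole argument on the position-character structure of the dependents, combined with \Cref{zeroSum} to count the resulting XOR-zero configurations.

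\emph{Setup and extraction.} Fix $(x_1,\ldots,x_s) \in X^s$ with at least $N := (2k)^c s + 1$ distinct dependents in $X$. By \Cref{lem:keydep}, each dependent key $y$ satisfies $y = \bigoplus_{i \in I(y)} x_i$ (as a symmetric difference of position characters) for some $I(y) \subseteq [s]$ of odd size, and moreover $y$ lies in the grid $\prod_\ell C_\ell$, where $C_\ell := \{\pi(x_i,\ell) : i \in [s]\}$ has size $\le s$. Partitioning each $C_\ell$ into $2k$ roughly-equal classes yields $(2k)^c$ grid cells. Since $N > (2k)^c s$, pigeonhole produces some cell containing $\ge s+1$ dependents whose associated $I$-vectors in $\mathbb{F}_2^s$ are linearly dependent; a nonempty $J$ with $\sum_{j \in J} I_j = 0$ then yields $\bigoplus_{j \in J} y_j = \emptyset$, a XOR-zero tuple of $|J|$ distinct keys in $X$. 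Since the $y_j$'s are distinct, $|J| \ge 4$ (any two distinct vectors sum to a nonzero vector, and a three-element XOR-zero among valid keys is impossible when $c$ is odd), and since the $y_j$'s all come from the same cell, they agree on the class at each position.

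\emph{Counting via \Cref{zeroSum}.} The extracted XOR-zero tuple has even size $|J| \le s+1$, so \Cref{zeroSum} bounds the number of such tuples in $X$ by $((|J|-1)!!)^c n^{|J|/2}$, giving a saving of $n^{-|J|/2}$ relative to the $n^{|J|}$ unconstrained choices. When $|J| \ge k$, the single witness already provides the required $n^{-k/2}$ saving; for smaller $|J|$, one can iterate the extraction on the remaining dependents to accumulate disjoint witnesses whose savings multiply, using that $N = (2k)^c s + 1$ remains large enough (by $O(s)$) after each round to re-apply pigeonhole. Summing over witness choices and embeddings of the witness into the $s$-tuple contributes at most $s^{O(1)}$ combinatorial overhead, and the remaining $s - O(k)$ coordinates of the tuple are free, contributing $n^{s - O(k)}$. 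The overall count is thus at most $n^{s - k/2} s^{O(1)}$, as required.

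\emph{Main obstacle.} The most delicate part is reconciling the XOR-zero relations among the $y_j$'s (which at face value are automatic once the $I_j$'s are chosen, since $y_j$ is determined by $x_1,\ldots,x_s$) with the requirement that each $y_j$ be an actual element of $X$; it is the membership condition $y_j \in X$ that provides the $n^{-1/2}$ per dependent via \Cref{zeroSum}. Ensuring that iterated extractions give enough independent constraints, and that the sizes of the extracted witnesses are controlled (so that \Cref{zeroSum} bounds remain useful), requires careful bookkeeping that uses the condition $k s^{kc} < \sqrt{n}$ to absorb the $s^{O(1)}$ overhead into the bound.
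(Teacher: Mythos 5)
There is a genuine gap, and it sits exactly where you flag the ``main obstacle.'' The witness you extract---a nonempty $J$ with $\sum_{j\in J}I_j=0$ over $\mathbb{F}_2$, hence $\bigoplus_{j\in J}y_j=\emptyset$---is an identity that holds for \emph{every} tuple $(x_1,\ldots,x_s)$ once the index sets $I_j$ are fixed: it is a relation among quantities determined by the $x_i$'s, not a constraint on them. Applying \Cref{zeroSum} to the $y$-tuple therefore bounds the number of $y$-tuples, not the number of $x$-tuples, and the claimed ``saving of $n^{-|J|/2}$'' never attaches to the object being counted. The only genuine constraints are the membership conditions $\bigoplus_{i\in I_j}x_i\in X$; applying \Cref{zeroSum} to the $(|I_j|+1)$-tuple $(y_j,(x_i)_{i\in I_j})$ saves a factor of $n^{(|I_j|-1)/2}$ \emph{on the coordinates in $I_j$ only}, and these savings do not multiply across $j$ when the sets $I_j$ overlap (if all $I_j$ lie inside a fixed small $B\subseteq\{1,\ldots,s\}$, no amount of iteration can save more than $n^{|B|}$ in total). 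Your plan to ``iterate the extraction'' so that ``savings multiply'' is precisely the step that requires a combinatorial mechanism controlling this overlap, and the proposal contains none; the pigeonhole into $(2k)^c$ grid cells is used only to produce the (unhelpful) linear dependence and then plays no further role.

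The paper's proof supplies exactly the missing device. \Cref{whpSaveCombi} shows that, because any $B\subseteq\{1,\ldots,s\}$ can support at most $|B|^c$ distinct dependents (they all lie in the grid $\prod_{\ell}\{\pi(x_i,\ell):i\in B\}$), among $(2k)^cs+1$ index sets one can greedily select $r\le k$ of them whose union $A$ satisfies $|A|-r\ge k$. One then pays only $n^{(|B_j|+1)/2}$ for each block $B_j$ of newly covered coordinates via \Cref{zeroSumProvable} (treating already-fixed coordinates as singletons), giving $n^{s-(|A|-r)/2}\le n^{s-k/2}$ overall. If you wanted to keep the linear-algebra extraction instead, you would have to show that after fixing the $y$-tuple (at cost $n^{|J|/2}$) the system $\bigoplus_{i\in I_j}x_i=y_j$ determines at least $|J|/2+k/2$ of the $x$-coordinates; since the relations indexed by $J$ sum to zero, the rank is at most $|J|-1$, so even in the best case this needs $|J|\ge k+2$, whereas your extraction only guarantees $|J|\ge 4$.
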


The goal is now to use the following lemma several times.
\begin{lemma}
	\label{mainWhpLemma}
	Let $X\subseteq [u]$ with $|X| = m$, and let $h_0,h_1 : [u]\to [n]$ be two
	independent simple tabulation hash functions. Fix some integer
	$k$. If $m < n/(2^8(4k)^c)$, then the maximum load of any bin when
	assigning keys using the two-choice paradigm is $O(\log\log n)$ with
	probability $1 - O(n^{-k+2})$.
\end{lemma}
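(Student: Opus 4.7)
My plan is to combine the arboricity structural result (\cref{arboricityLemma}) with a case split, using \cref{whpSave} to control simple--tabulation dependencies. Set $L = 2k\lg\lg n$; the goal is to show that no bin reaches load $\ge L$ except with probability $O(n^{-k+2})$. If some bin does reach load $L$, \cref{arboricityLemma} yields a connected subgraph of the hash graph on $x$ vertices with arboricity $a$ satisfying $a\lg x \ge L$. I would split on the threshold $x^\star = \lg n$: the \emph{large-component} case ($x \ge x^\star$) and the \emph{dense} case ($x < x^\star$, which forces $a \ge 2k$).

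\emph{Large-component case.} Every connected graph on $x$ vertices contains a spanning tree, so there are $x-1$ distinct keys whose hash--edges form a tree on $x$ vertices. I would union--bound over the choice of vertex set, labelled tree topology, and the $(x-1)$-tuple of keys realising it, and \emph{first} pretend that the $x-1$ hashes decouple. Each tree on a fixed vertex set is realised with probability $n^{-2(x-1)}$, and standard Cayley--type counting yields an expected count of order $n\cdot(x/n)^{x}$ up to polynomial factors. For $x \ge \lg n$ this is $n^{-\Omega(\lg n)}$, comfortably absorbed into the $O(n^{-k+2})$ target.

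\emph{Dense case.} By Nash--Williams I may pass to a subgraph $H$ on (at most) $x$ vertices with at least $2k(x-1)$ edges. Union--bounding over the $x$--set $V(H)$ and over the $s := 2k(x-1)$ keys realising these edges, and \emph{again} pretending hashes decouple, the probability that each of the $s$ keys lands inside $V(H)\times V(H)$ is $(x/n)^{2s}$, so the expected count is bounded by
\[
\binom{2n}{x}\binom{m}{s}\left(\frac{x}{n}\right)^{2s} \;\le\; \left(\frac{e^{O(k)}}{k^{2k}}\left(\frac{m}{n}\right)^{2k}\left(\frac{x}{n}\right)^{2k-1}\right)^{x}\!\!\!.
\]
The hypothesis $m/n \le 1/(2^8(4k)^c)$ makes the bracketed factor $\le n^{-(2k-1)}$ already at $x = 2$, and the geometric tail over $2 \le x < \lg n$ preserves this. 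The dominant $x=2$ term gives $O(n^{-2(2k-1)}) = O(n^{-4k+2})$, beating the target with margin.

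\emph{Handling dependencies, and the main obstacle.} Both estimates above pretend that the hashes of distinct keys are independent, which fails for simple tabulation. To repair this, for every fixed $s$-tuple of keys (with $s = O(\lg n)$) I would apply \cref{whpSave} to discard the $m^{s-k/2}\cdot s^{O(1)}$ ``thick'' tuples carrying more than $(2k)^c s$ extra dependents; the condition $ks^{kc} < \sqrt{n}$ of that lemma holds throughout, since $s$ is polylogarithmic in $n$. For every remaining tuple, \cref{lem:keydep} exposes a subset of $\ge s - (2k)^c s$ keys whose hashes are genuinely XOR--independent, so the $(x/n)^{2s}$ bound is valid up to a constant--in--the--exponent loss. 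The hardest part is ensuring that the $n^{k/2}$ penalty from ``thick'' tuples, together with the residual dependency loss, is absorbed by the union--bound savings; this is precisely where the constant $2^8(4k)^c$ in the hypothesis $m < n/(2^8(4k)^c)$ is calibrated, since it makes the per--$x$ geometric series converge with enough slack to land at $O(n^{-k+2})$.
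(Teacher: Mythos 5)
Your skeleton matches the paper's: \cref{arboricityLemma} plus a dichotomy between a large component and a small dense (high-arboricity) subgraph, with \cref{whpSave} invoked to control simple-tabulation dependencies. But the step you flag as "the hardest part" is in fact where the proposal breaks, and it is not a calibration issue that the constant $2^8(4k)^c$ can absorb. First, the arithmetic: for a thin tuple you claim an XOR-independent subset of size $\ge s-(2k)^c s$, which is negative. What \cref{whpSave} actually buys is this: after discarding the $m^{s-k/2}s^{O(1)}$ thick tuples, a greedy extraction from the remaining $s$ keys yields an independent subset $S'$ of size only about $s/(2k)^c$ (every discarded key is dependent on $S'$, and a thin tuple admits at most $(2k)^c|S'|$ such dependents). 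So the probability bound degrades from $n^{-2(s-1)}$ to $n^{-2(|S'|-1)}$ --- a polynomial-in-$n$ loss in the exponent, not a "constant-in-the-exponent loss." Worse, and this is the real gap, $S'$ need not be connected and need not span the tree or the dense subgraph $H$: once you restrict to $S'$, the object whose probability you are bounding is no longer "this vertex set with this tree shape," and your union bound over $\binom{2n}{x}\times(\text{tree shapes})\times(\text{key tuples})$ no longer pins down the hash constraints. Each connected piece of $S'$ beyond the first costs an extra free vertex, i.e.\ an extra factor of $n$, and nothing in your write-up pays for it.

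This is exactly what the paper's proof spends most of its effort on: it builds the independent set $S$ greedily edge by edge (never admitting a dependent key), tracks its components $C_1,\ldots,C_t$, and charges the $t-1$ extra "roots" to the dependent bridge edges $b_i\in Y\setminus S$, which is what turns the naive vertex count $n^{a}$ into $n^{a-t+1}\cdot 2^{t-1}\binom{(4k)^c s}{t-1}\cdot 2^{2s}$ and makes the geometric series close. You would need to import that encoding (or an equivalent device) for your argument to go through. Two secondary points: your large-component threshold $x^\star=\lg n$ gives only $n^{-O(1)}$ with the base $O(m/n)$ available, not $n^{-k+2}$ for arbitrary $k$; you need $x^\star=\Theta(k\lg n)$ as in the paper. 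And in the dense case you should bound the number of \emph{excess} edges (the paper's $s-a+t\le k$ step) before converting arboricity into a load bound via \cref{arboricityLemma}.
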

\begin{proof}
	Fix the hash values of all the keys and consider the hash graph.
	Note that there
	is a one-to-one correspondence between the edges and the keys and we will
	not distinguish between the two in this proof.
	Consider any connected subgraph $C$ in the hash graph. We wish to argue
	that $C$ cannot be too big or have too high arboricity. In order to do
	this, we construct a set $S$ of independent edges contained in $C$.
	Initially let $S = \set{e}$ for some edge in $e$ in $C$.
	At all times we maintain the set $Y = Y(S)$ of keys which are dependent
	on the keys in $S$. Note that $S \subseteq Y$.
	The set $S$ is constructed iteratively in the following way:
	If there exists an edge $e\in C\sm Y$ that is incident to an edge in $S$
	add $e$ to $S$. Otherwise, if there exists
	an edge $e\in C\sm Y$, which is incident to an edge in $Y$, add $e$ to $S$.
	If neither type of edge exists we do not add more edges to $S$. Note
    that in this case C = Y.

	At any point we can partition the edges of $S$ into connected
	components $C_1,\ldots, C_t$, such that $C_1$ is the component of the
	initial edge $e$ of $S$. For each $i > 1$ we let $b_i\in Y\sm S$ be an edge
    incident to $C_i$ (such an edge must exist by the definition above).
    Order the components $C_2,\ldots,C_t$ such that $b_2 < \ldots < b_t$.
    For a visualisation of $S$ \cref{mainWhpLemmaFigure} can be consulted.
    Intuitively, since the $b_i$s cannot be chosen in too many ways,
    the ``first node'' of each $C_i$ cannot be chosen in too many ways, and
    thus it is not a problem that the set $S$ is not necessarily connected.

	\begin{figure}[htbp]
		\centering
		\includegraphics[width=.5\textwidth]{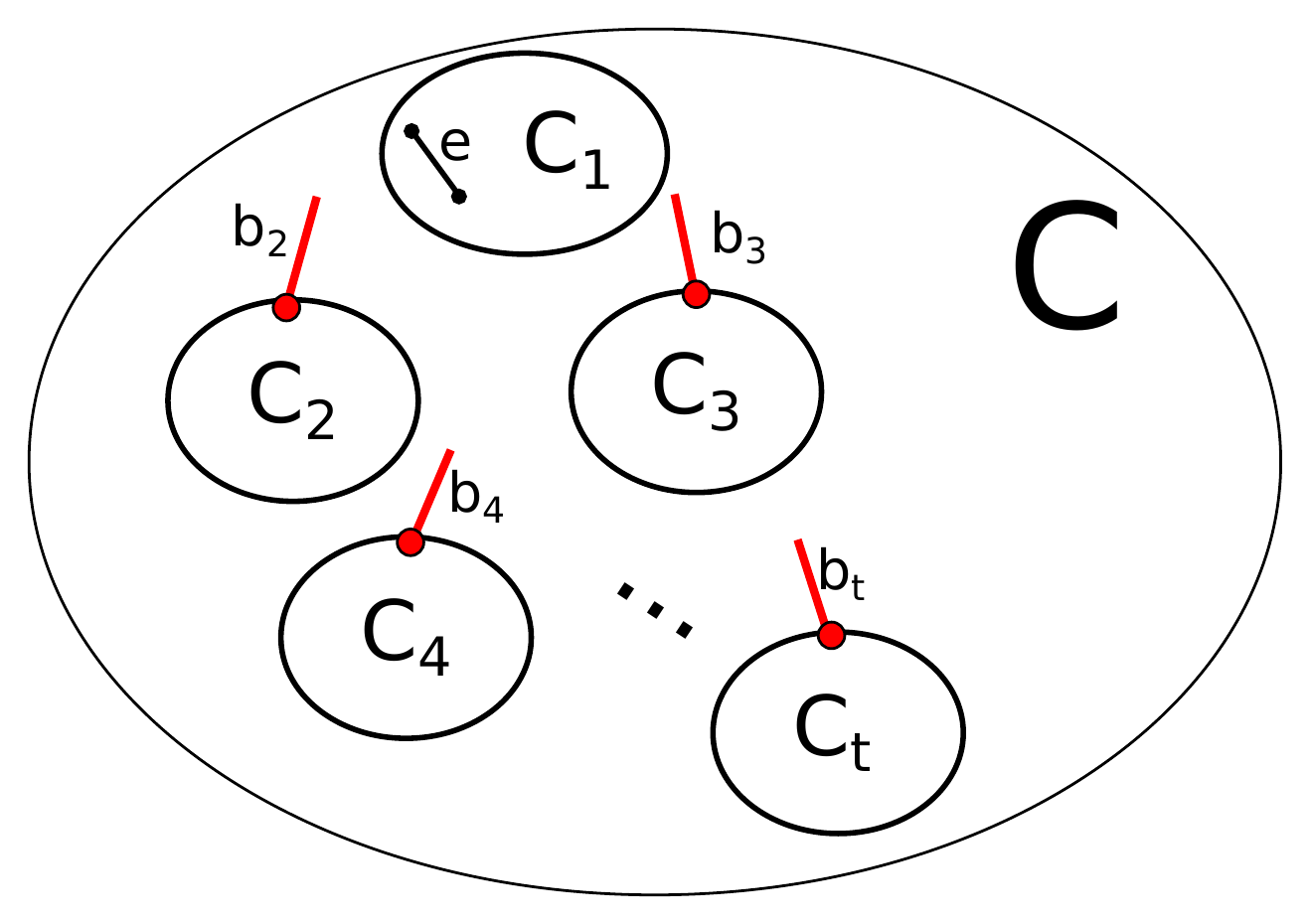}
		\caption{A visualization of the process.
			$C_1,\ldots,C_t$ correspond to components contained in $C$, and the
        red lines are the corresponding edges $b_2,\ldots, b_t \in Y$.}
		\label{mainWhpLemmaFigure}
	\end{figure}

	We stop the algorithm when either $\abs{S} \ge k \lg n$ or
	$\abs{Y} > (4k)^c \abs{S}$. We will show that the probability that this
	can happen in the hash graph is bounded by $O(n^{-k+2})$. The two cases
	are described below and the proof of each case is ended with a
	$\diamond$.

	\textbf{The algorithm stops because $\abs{Y} > (4k)^c \abs{S}$:}
	In this case we know that $|S|\le k\lg n$ since the algorithm has not
    stopped earlier and $|S|$ only grows by one in each step. Fix the size
    $\abs{S} = s$ and the number of components $t$. We wish to bound the
    number of ways $S$ could have been chosen.
    First we bound the number of ways we can choose the subgraphs
    $C_1,\ldots,C_t$ -- i.e. the edges, nodes, and keys corresponding to
    edges. Let $a_i$ be the number of nodes in the subgraph $C_i$.
	We can choose the structure of a spanning tree in each of $C_1,\ldots,C_t$
    in no more than $2^{2(a_1-1) + \ldots + 2(a_t-1)} \le 2^{2s}$ ways. Let $a
    = \sum_i a_i$ be the total number of nodes. Then this places $a-t$ of
    the edges and it remains to place $s-a+t$ edges, which can be done in at
    most $s^{2(s-a+t)}$ ways. Similarly, the number of ways that the nodes can
    be chosen is at most $n^{a-t+1}2^{2s}2^{t-1}\binom{(4k)^c s}{t-1}$ by arguing in
    the following manner: For each component $C_{i,i>1}$ we can
    describe one node by referring to $b_i$ and which endpoint the node is at
    (these are the red nodes in \Cref{mainWhpLemmaFigure}). Thus we can
    describe $t-1$ of the nodes in at most $2^{t-1}\binom{|Y'|}{t-1}$ ways, where
    $Y'$ was the set $Y$ before the addition of the last edge, so $|Y'| \le
    (4k)^cs$. These $t-1$ nodes can be picked in at most $2^{2s}$ ways, since
    there are at most $2s$ nodes in $C_1,\ldots,C_t$. The remaining $a-t+1$
    nodes can be chosen in no more than $n^{a-t+1}$ ways. Assuming that $n$ is
    larger than a constant we know by \cref{whpSave} that the number of ways to
    choose the keys in $S$ (including the order) is bounded by $s^{O(1)}
    m^{s-k}$. Hence for a fixed $a$ the total number of ways to choose $S$ is
    at most:
	\[
	2^{4s} \cdot s^{2(s-a+t)}
	\cdot n^{a-t+1} \cdot
	2^{t-1}\binom{(4k)^c s}{t-1} \cdot
	s^{O(1)} m^{s-k}
	\]
	For each of the $s$ independent keys we fix $2$ hash values, so the
	probability that those values occur is at most $n^{-2s}$. Thus the
	total probability
	that we can find such $S$ for fixed values of $s,a,t$ is at most:
	\begin{align*}
    &2^{4s} \cdot
	s^{2(s-a+t)+O(1)}
	n^{a-t+1-2s} \cdot
	2^{t-1}\binom{(4k)^c s}{t-1}
	m^{s-k}
	\\
    &\quad\le
    n
    2^{5s}
	\left ( \frac{s^2}{n} \right )^{s-a+t}
	s^{O(1)}
	\left(\frac{e(4k)^c s}{t-1}\right)^{t-1}
	\left ( \frac{m}{n} \right )^{s}
	m^{-k}
    \\
    &\quad\le
	n
	s^{O(1)}
	\left ( \frac{2^5 e (4k)^c m}{n} \right )^{s}
	m^{-k}
    \\ &\quad\le
	ns^{O(1)}
	m^{-k}
    \\ &\quad\le
	n
	(\lg n)^{O(1)}
	m^{-k}
	\end{align*}
	Since there are at at most $(2k \lg n)^3 = (\lg n)^{O(1)}$ ways to
	choose $s,a,t$ we can bound the probability by a union bound and
	get $n(\lg n)^{O(1)} m^{-k} = O(n^{-k+2})$.
	\hfill $\diamond$

	\textbf{The algorithm stops because $\abs{S} \ge k \lg n$:}
	Let $s,a,t$ have the same meaning as before. The same line of argument
    (without using \cref{whpSave}) shows that the number of ways to choose $S$
    is bounded by
	\[
	n
	s^{O(1)}
	\left ( \frac{s^2}{n} \right )^{s-a+t}
	\left ( \frac{2^5 e (4k)^c m}{n} \right )^{s}
	\le
	n
	s^{O(1)}
	2^{-s}
	\]
	Since $s = \ceil{k \lg n}$ we know that $2^{-s} \le n^{-k}$ and a union
	bound over all choices of $a,t$ suffices.
	\hfill $\diamond$

	Along the same lines we can show that $s-a+t \le k$ with probability
	$1-O(n^{-k+2})$. Here, the idea is that we need to place $s-a+t$
	additional keys when the spanning trees are fixed.
	Such a key and placement can be chosen in at most
	$s^2$ ways, but it happens with probability at most $1/n^2$ due to the
	independence of the keys.

	Now, assume there exists a component with arboricity $\alpha \ge
    2(k+2)(4k)^c$ and choose a subgraph $H$ such that $\abs{E(H)} \ge
    \alpha(\abs{V(H)}-1)$. Consider the algorithm constructing $S$ restricted
    to $H$ (and define $s$, $a$, and $t$ analogously). If the algorithm is not
    stopped early we know that $Y$ contains the edges of $H$, so $\abs{Y} \ge
    \abs{V(H)} \cdot (k+2)(4k)^c$ and thus $\abs{S} \ge (k+2)\abs{V(H)}$. This
    implies that $s-a+t \ge (k+1)\abs{V(H)} \ge k+1$, i.e.~every component has
    arboricity $\le 2(k+2)(4k)^c$ with probability $1 - O(n^{-k+2})$.

	From the analysis above we get that there exists no component with more
	than $(4k)^ck\lg n$ nodes with probability $1 - O(n^{-k+2})$. Combining
	this with \cref{arboricityLemma} we now conclude that with probability
	$1-O(n^{-k+2})$ the maximum load is upper bounded by:
	\[
	2(k+2)(4k)^c \cdot \lg \left ( (4k)^c k \lg n \right )
	=
	O \left ( \lg \lg n \right )
	\]\qed
\end{proof}

\begin{proof}[Proof of \Cref{thm:lglgwhp}]
Divide the $m=O(n)$ balls into $2^8(4\ceil{\gamma+2})^c \frac{m}{n} = O(1)$
portions of size $\le n/(2^8(4k)^c)$, apply \Cref{mainWhpLemma} to each
portion, and take a union bound.\qed
\end{proof}

\section{Bounding the expected maximum load}\label{sec:expect}

This section is dedicated to proving \cref{thm:lglgnexpect}.
The main idea is to bound the probability that a big binomial tree
appears in the hash graph. A crucial point of the proof is to consider a
subtree of the binomial tree which is chosen such that the number of leaves are
much larger than the number of internal nodes.

First of all note that by \cref{thm:lglgwhp}, the probability that the maximum load
is more than $k_0 \cdot \lg \lg n$ is $O(n^{-1})$ for some constant $k_0 > 1$.
Hence it suffices to prove that the probability that the maximum load is larger
than $\lg \lg n + r+1$ is at most $O((\lg \lg n)^{-1})$ for some constant $r$
depending on $m/n$ and $c$.
\begin{Observation}\label{obs:Bk}
		If there exists a bin with load at least
		$k+1$ then either there is a component with more edges than nodes, or the
		binomial tree $B_k$ is a subgraph of the hash graph.
	\end{Observation}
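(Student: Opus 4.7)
The plan is to build a standard \emph{witness tree} of order $k+1$ for any bin with load at least $k+1$; this witness has precisely the shape of the binomial tree $B_{k+1}$, and the conclusion follows by analyzing whether its $2^{k+1}$ position-slots collapse in the hash graph.

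First I would construct the witness tree recursively. For a bin $u$ with load $\ge j$ after some prefix of insertions, let $b_0,\ldots,b_{j-1}$ be the first $j$ balls placed into $u$, in order. At the moment $b_i$ was placed, $u$ already held $i$ balls, so the other endpoint $u_i$ of $b_i$ had load $\ge i$ at that time. Define $W_u^{(j)}$ to have root $u$ with children $W_{u_0}^{(0)},\ldots,W_{u_{j-1}}^{(j-1)}$ (built from the corresponding earlier states), joined by the edges $b_0,\ldots,b_{j-1}$. By the recursive definition of $B_j$, the shape of $W_u^{(j)}$ is exactly $B_j$, so it has $2^j$ position-slots and $2^j-1$ distinct edges. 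Apply this to $v$ and let $V^\star$ be the set of distinct vertices among the $2^{k+1}$ slots of $W_v^{(k+1)}$.

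The two easy cases are then immediate. If $|V^\star|=2^{k+1}$, the witness embeds unchanged, so $B_{k+1}$, and hence $B_k$, is a subgraph of the hash graph. If $|V^\star|\le 2^{k+1}-2$, the $2^{k+1}-1$ witness edges on at most $2^{k+1}-2$ vertices satisfy $|E|-|V|\ge 1$, so decomposing this edge set into connected components yields one component $H_j$ with $|E(H_j)|>|V(H_j)|$; passing from $H_j$ to the connected component $C$ of the hash graph that contains it does not decrease $|E|-|V|$ (every added vertex must be reached by at least one added edge), so $C$ itself has more edges than nodes.

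The only delicate case is $|V^\star|=2^{k+1}-1$, in which exactly one pair of slots coincide and the edge-vertex count only gives $|E|=|V|$; this is where the main obstacle lies, since the counting alone does not force a second cycle. I would handle it via the structural observation that $B_{k+1}$ contains two vertex-disjoint copies of $B_k$: the subtree rooted at the root's child of order $k$, and the root together with the subtrees rooted at its children of orders $0,1,\ldots,k-1$. These two $B_k$'s partition the $2^{k+1}$ slots of $B_{k+1}$, so the single coinciding pair either lies entirely inside one of them, leaving the other half collision-free and hence a $B_k$-subgraph of the hash graph, or has one slot in each, in which case neither half has an internal collision and each yields a $B_k$-subgraph. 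Thus $B_k$ embeds in the hash graph in every sub-case, completing the argument.
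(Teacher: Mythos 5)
Your witness-tree construction and the two extreme cases ($|V^\star|=2^{k+1}$ and $|V^\star|=2^{k+1}-1$) are handled correctly, and in particular the observation that $B_{k+1}$ splits into two vertex-disjoint copies of $B_k$ is a nice way to absorb a single collision. The gap is in the case $|V^\star|\le 2^{k+1}-2$: your count of ``$2^{k+1}-1$ distinct edges'' is not justified, and it is false in general. A witness-tree edge is a ball, and a ball placed into bin $w$ appears as a child edge of \emph{every} slot labelled $w$ whose order exceeds that ball's rank among the balls placed into $w$. So as soon as two slots share a bin label (which is exactly the situation in this case), their children-prefixes overlap and distinct witness-tree edges can map to the \emph{same} hash-graph edge. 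For instance, two order-$1$ slots both labelled $w$ have the same single child ball $b$, and their child slots also coincide; the image then loses two vertices but also one edge, giving $|E|=|V|$ rather than $|E|-|V|\ge 1$. With more identifications the image can even collapse to a tree (a path in the witness tree whose two endpoints map to the same bin can map to a single back-and-forth edge), so the image need not contain any cycle at all. In such situations your argument yields neither ``more edges than nodes'' nor a $B_k$, since the disjoint-copies trick is only invoked when there is exactly one coinciding pair. Repairing this requires a genuine pruning/encoding of the collapsed witness tree, which is the real technical content of witness-tree arguments.

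For contrast, the paper avoids this entirely by a different route: assuming every component has at most as many edges as nodes, it deletes one edge per component to obtain a forest (changing any load by at most $1$) and then proves by induction on the insertion order that in a forest a bin of load $k$ is the root of an embedded $B_k$ --- acyclicity guarantees the $l$ earlier edges at a node lead to \emph{disjoint} binomial trees, which is precisely the disjointness your counting argument fails to secure.
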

	\begin{proof}
		Assume no component has more edges than nodes. Then,
		removing at most one edge from each component yields a forest.
		One edge per component will
		at most increase the load by $1$, so consider the remaining forest.

		Consider now the order in which the keys are inserted, and use induction on
		this order. Define $G_j$ to be the graph after the $j$th key is inserted.
		The induction hypothesis is that if a bin has load $k$, then it is the root in
		a subtree which is $B_k$. For $G_0$ it is easy to see. Consider now the
		addition of the $j$th key and assume that the hypothesis holds. Assume that the
		added key corresponds to the edge $(u,v)$ and that the load of bin $u$
		increases to $l$. Since there are no cycles, node $G_j$ must have edges
		$(u,v_0),\ldots, (u,v_{l-1})$, and by the induction hypothesis $v_0,\ldots,v_{l-1}$ are
		roots of disjoint binomial trees $B_0,\ldots,B_{l-1}$, so $u$ is the root of a
		$B_l$.\qed
	\end{proof}


Let \emph{double cycle} denote any of the minimal obstructions described in \cite{patrascu11charhash}, that is, either a cycle and a path between two vertices of the cycle, or two cycles linked by a path. Note that any connected graph with two cycles (not necessarily disjoint) contains a double cycle as a subgraph.

\begin{Observation}\label{obs:smalldoublecycle}
	If there exists a bin with load at least $k+1$, then either the binomial tree $B_k$ is a subgraph of the hash graph, or a double cycle with at most $4k+4$ edges is a subgraph of the hash graph.
\end{Observation}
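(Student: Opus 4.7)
The plan is to prove this observation by induction on $k$, extending the inductive argument of \cref{obs:Bk}. The target statement is: for any bin $u$ and time $j$, if $u$ has load at least $k+1$ in $G_j$, then either $B_k$ is rooted at $u$ as a subgraph of $G_j$, or a double cycle with at most $4k+4$ edges is a subgraph of $G_j$. The base case $k=0$ is immediate, since $B_0$ is a single vertex. For the inductive step, let $j^{*}$ be the first time $u$'s load reaches $k+1$; the key inserted at time $j^{*}$ gives a hash-graph edge $(u,w)$, and immediately before $j^{*}$ both $u$ and $w$ have load at least $k$. Applying the induction hypothesis to each, either one witnesses a double cycle with at most $4(k-1)+4 = 4k \leq 4k+4$ edges in $G_{j^{*}-1} \subseteq G_j$ (and we are done), or each bin is the root of a $B_{k-1}$ subtree, call them $T_u$ and $T_w$.

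I would then case-split on $V(T_u) \cap V(T_w)$. If the intersection is empty, $T_u \cup \{(u,w)\} \cup T_w$ is $B_k$ rooted at $u$. If it contains two distinct vertices $x_1, x_2$, then the paths from $u$ to $x_i$ in $T_u$ and from $x_i$ to $w$ in $T_w$, closed off by the edge $(u,w)$, produce two cycles of length at most $2(k-1)+1 = 2k-1$ each, sharing the edge $(u,w)$; their union is a theta graph with at most $2(2k-1)-1 = 4k-3 \leq 4k+4$ edges, using that paths in a $B_{k-1}$ have length at most $k-1$.

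The main obstacle is the remaining subcase in which $T_u$ and $T_w$ share exactly one vertex, producing only a single cycle. My plan here is to exploit the slack in $u$'s load being at least $k+1$, strictly more than the $k$ required by the induction hypothesis: $u$ has a chosen edge beyond those used in $T_u$, and the induction hypothesis along this spare edge gives an alternative $B_{k-1}$ subtree rooted at $u$. Comparing this alternative with $T_w$ falls into one of the previous cases and either yields $B_k$, a theta graph, or a second cycle sharing $(u,w)$ with the original, forming a double cycle of at most $4k+4$ edges since each constituent cycle has length at most $2k-1$. I expect this single-shared-vertex subcase, in which one has to argue that an alternative witness always exists and always interacts with $T_w$ in a way that produces a second cycle, to be the main technical step.
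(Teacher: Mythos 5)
Your strategy---induct on $k$, glue a $B_{k-1}$ rooted at $u$ to a $B_{k-1}$ rooted at $w$ along the edge $(u,w)$, and extract cycles when the two witnesses overlap---is genuinely different from the paper's, but it has two gaps, one of which is fatal as sketched. First, the case split on $\abs{V(T_u)\cap V(T_w)}$ is not the right dichotomy. Writing $s_V$ and $s_E$ for the numbers of shared vertices and shared edges, the graph $T_u\cup\set{(u,w)}\cup T_w$ is connected with cycle rank $s_V-s_E$, which equals the number of connected components of the intersection $T_u\cap T_w$ (a forest). So if the two trees overlap in a single connected subtree---for instance they share one common edge, hence two vertices---the union contains only \emph{one} independent cycle and no theta graph exists. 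Your ``two shared vertices'' case therefore does not yield a double cycle in general, and the closed walks you route through $x_1$ and $x_2$ need not even be simple or distinct cycles. The problematic single-cycle situation is thus much broader than the ``exactly one shared vertex'' subcase you isolate.

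Second, your resolution of the single-cycle case is not a proof. The induction hypothesis hands you \emph{some} $B_{k-1}$ rooted at $u$; it gives no control over which incident edges it uses and no second witness ``along a spare edge''---that would require a strictly stronger inductive statement. More fundamentally, no such local repair can always succeed: the component of $u$ may be unicyclic, in which case it contains no double cycle at all, so whenever the union has cycle rank one you are \emph{forced} to exhibit a vertex-disjoint pair of $B_{k-1}$'s, and your sketch never establishes this (an alternative witness could overlap $T_w$ in exactly the same way, and the ``second cycle'' you hope to create cannot exist). The paper avoids the whole difficulty by arguing globally on the $(k+1)$st load-graph of $v$ from the proof of \Cref{arboricityLemma}: if that graph has no more edges than vertices, the forest argument of \Cref{obs:Bk} produces $B_k$ directly; otherwise a breadth-first spanning tree from $v$ has height at most $k+1$ and leaves two independent non-tree edges, and the resulting fundamental cycles are short enough that their union contains a double cycle on at most $4k+4$ edges. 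I would switch to that decomposition (or prove a substantially stronger inductive invariant that tracks disjointness of the witness trees) rather than trying to patch the gluing step.
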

\begin{proof}
	As in the proof of \Cref{arboricityLemma}, let $v$ be the node with load $\ge k+1$, and let $(V_0,E_0\cup E_1\cup\ldots\cup E_{k+1})$ be the $k+1$st load-graph of $v$.

	If the $k+1$st load-graph of $v$ has no more edges than vertices, it must contain $B_k$ as a subgraph, as noted in \Cref{obs:Bk}. Otherwise, take $v$ as root and 
	consider a breadth first spanning tree, $T$. It must have height at most $k+1$, and there must be two edges $(u,w),(y,z)$ of the combined load-graph not in $T$.
	Furthermore, these edges cannot have both endpoints have maximal distance $k+1$ from $v$. Thus, the union $T\cup\{(u,w),(y,z)\}$ has at most $4(k+1)$ edges and must contain a double cycle as a subgraph.
	\qed
\end{proof}

We are now ready to prove \cref{thm:lglgnexpect}.
\begin{proof}[Proof of \cref{thm:lglgnexpect}.]
    if $m(1+\eps) < n$ we know from \cite[Thm.~1.2]{patrascu11charhash}
	that no component of the hash graph
	contains a double cycle with probability $O(n^{-1/3})$. Looking into the proof
	we see that there exists no double cycle consisting of at most
	$s$ edges with probability $(O(m/n))^{s}n^{-1/3}$ even when $m > n$. In the
	terminology of \cite{patrascu11charhash}, $\lg(n/m)$ bits per edge is saved in
	the encoding of the hash-values. But when $\lg(n/m) < 0$ we add $\lg(m/n)$ bits
	to the encoding instead. If the double cycle consist of $s$ edges this is $s\lg(m/n)$
	extra bits in
	the encoding, i.e. that the bound on the probability is multiplied with
	$(O(m/n))^s$. This means that we only need to bound the probability that there
	exists a binomial tree $B_k, k = \ceil{\lg \lg n + r}$, because, according to \Cref{obs:smalldoublecycle}, any bin with
	load $k+1$ will either imply the existence of $B_k$ in the hash graph or the
    existence of a double cycle consisting of $4k+4 = O(\lg \lg
    n)$ edges, and
	the latter happens with probability $(\lg n)^{O(1)} n^{-1/3} = O(n^{-1/4})$.

	Say that the hash graph contains a binomial tree $B_k$. Consider the subtree
	$T_{k,d}$ defined by removing the children of all nodes that have less than
	$d$ children, where $d \le k$ is some constant to be defined (see \cref{fig:binom_s}). Note that
    $T_{k,d}$ has $(d+1)2^{k-d}-1$ edges. We will now follow the same
    approach as in the proof of \Cref{thm:lglgwhp}, in the sense that we
    construct as set $S$ of independent keys from $T_{k,d}$, and show that
    this is unlikely.
	We construct the ordered set $S$ by traversing $T_{k,d}$ in the following way:
	Order the edges in increasing distance from the root and on each level from left to right.
	Traverse the edges in this order. A given edge is added to the ordered set $S$ if the following
	two requirements are fulfilled:
	\begin{itemize}
		\item   After the edge is added $S$ corresponds to a connected subgraph of $T_{k,d}$.
		\item   The key corresponding to the edge is independent of all the keys corresponding
		to the edges in $S$.
	\end{itemize}
	A visualization of the set $S$ can be seen in \cref{fig:binom_s}.
\begin{figure}[htbp]
    \centering
    \includegraphics[width=.7\textwidth]{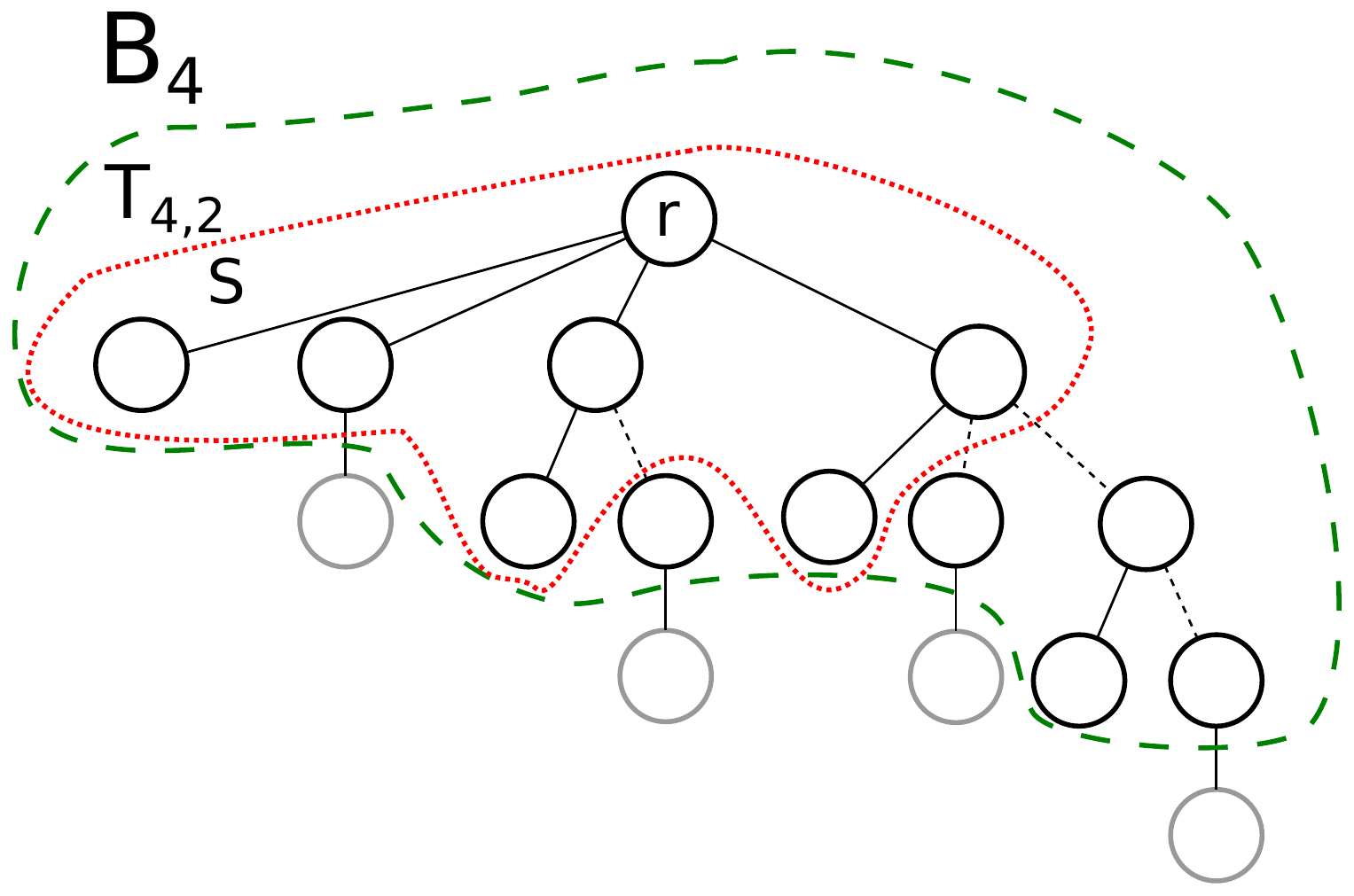}
    \caption{Example of $T_{4,2}$ and the corresponding set $S$. The dashed
        edges correspond to key dependencies at the time the edge is considered
    in the order. This example would correspond to case 2.}
    \label{fig:binom_s}
\end{figure}

	We will think of $S$ as a set of edges, but also as a set of independent keys.
The idea is to bound the probability that we could find such a set $S$. We will split the proof into four cases depending on $S$, and each will end with a $\Diamond$.

\textbf{Case 1: $s := \abs{S} = (d+1)2^{k-d}-1$:}
In this case every edge of the tree is independent, and
there are at most $m^s$ different ways to choose
the ordered set $S$. Note that there are $2^{k-d}$ groups of $d$ leaves which
have the same parent. The set $S$ corresponds to the same subgraph of the hash
graph regardless of the ordering of these leaves. Since we only want to bound
the probability that we can find such $S$, we can thus chose the edges of
$S$ in at most
$m^s \left ( \frac{1}{d!} \right )^{2^{k-d}}$ ways.
For a given choice of $S$ there are $s-1$ equations
$h_k(x) = h_k(y)$ which must be fulfilled where $k \in \set{1,2}$ and $x,y$ are keys
in $S$. Since the keys in $S$ are independent, the probability that this happens for a given
$S$ is at most $2n^{-(s-1)}$. By a union bound on all the choices of $S$
the probability that such an $S$ exists is at most:
\begin{align*}
m^s \left ( \frac{1}{d!} \right )^{2^{k-d}} (2n^{-(s-1)})
& \le
\frac{2m^s}{n^{s-1}} \left ( \frac{1}{\sqrt[d+1]{d!}} \right )^{2^{k-d}(d+1)}
\\
&
\le
2n \cdot \left ( \frac{m}{n\sqrt[d+1]{d!}} \right )^s
\end{align*}
We now pick $d$ and $r$ such that
$\frac{m}{n\sqrt[d+1]{d!}} < \frac{1}{2}$ and $r \ge d+1$. It then follows that
$s \ge 2 \lg n$ and the probability is bounded by $2n^{-1}$.
\hfill $\Diamond$

For case 2 and case 3, we will use the following lemma, which is proved in
\cref{sec:lemmaproofs}.

\begin{lemma}
	\label{lem:stdSave}
	Let $X \subset U$ be a subset with $n$ elements and fix $s$ such that $s^c \le \frac{4}{5}n$. The number of $s$-tuples $(x_1,\ldots,x_s), x_i \in X$ for which there exists $y \in X, y \neq x_1,\ldots, x_s$ such that $h(y)$ is dependent of $h(x_1), \ldots, h(x_s)$ is no more than:
	\[
	s^4 \frac{3^c}{6} n^{s-1}
	\le
	s^{O(1)}
	n^{s-1}
	\]
\end{lemma}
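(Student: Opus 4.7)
The plan is to fix, for each ``bad'' $s$-tuple, a minimal witness of the dependency and then union-bound over all possible witnesses. By \Cref{lem:keydep}, the assumption that some $y \in X \setminus \{x_1,\ldots,x_s\}$ has $h(y)$ dependent on $h(x_1),\ldots,h(x_s)$ is equivalent to the existence of a subset $I \subseteq \{1,\ldots,s\}$ such that every position character of the collection $\{y\} \cup \{x_i : i \in I\}$ occurs an even number of times. Writing $t = |I|$, the collection has $t+1$ keys, so parity forces $t+1$ even, i.e., $t$ odd; and $t = 1$ would give $y = x_{i_1}$, contradicting $y \neq x_i$. Hence $t \geq 3$.

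For each odd $t \geq 3$ I would union-bound over the choice of $I$ (giving $\binom{s}{t}$ options), the choice of the ordered $(t+1)$-tuple $(y, x_{i_1},\ldots,x_{i_t}) \in X^{t+1}$ with XOR $= \emptyset$ (at most $(t!!)^c\, n^{(t+1)/2}$ options by \Cref{zeroSum} applied with $2t' = t+1$), and the choice of the remaining $(x_i)_{i \notin I}$ freely ($n^{s-t}$ options). This yields the bound
\[
T_t \;:=\; \binom{s}{t}\, (t!!)^c\, n^{s - (t-1)/2}
\]
on the number of bad $s$-tuples whose minimal witness has $|I| = t$. For $t = 3$ this already matches the target shape: $T_3 \leq \frac{s^3 \cdot 3^c}{6}\, n^{s-1}$.

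The main step will be to show $T_t \leq T_3$ for every odd $t \geq 3$ with $T_{t+2} > 0$, so that summing over the at most $s$ relevant values of $t$ produces at most $s \cdot T_3$, matching the claimed bound. A direct calculation gives
\[
\frac{T_{t+2}}{T_t} \;=\; \frac{(s-t)(s-t-1)\,(t+2)^{c-1}}{(t+1)\, n}.
\]
The obstacle here is to use the hypothesis $s^c \leq \frac{4}{5}n$ to kill the awkward $(t+2)^{c-1}$ factor. The key trick is that whenever $T_{t+2} > 0$ we have $s \geq t+2$, hence $s^{c-2} \geq (t+2)^{c-2}$ (using $c \geq 2$), and therefore $s^2 = s^c/s^{c-2} \leq (4n/5)/(t+2)^{c-2}$; combining with $(s-t)(s-t-1) \leq s^2$ reduces the ratio to $\frac{4(t+2)}{5(t+1)} \leq 1$ for $t \geq 3$. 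Summing the $T_t$'s then yields $\sum_{t} T_t \leq s \cdot T_3 \leq \frac{s^4 \cdot 3^c}{6}\, n^{s-1}$, as claimed.
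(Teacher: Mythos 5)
Your proposal is correct and follows essentially the same route as the paper's proof: reduce to a witness set $I$ with $\bigoplus_{i\in I}x_i=y$, count the tuples via \Cref{zeroSum}, and use $s^c\le\frac45 n$ to show the $|I|=3$ term dominates, paying a factor $s$ for the choice of $|I|$. The only cosmetic difference is that you bound the ratio $T_{t+2}/T_t\le 1$ directly (making the $c\ge 2$ step explicit) where the paper bounds the reciprocal by $\frac{4n/5}{s^c}\ge 1$.
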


\textbf{Case 2: All the edges incident to the root lie in $S$:}
Let $S'$ be defined in a similar manner as $S$:
Order the edges in increasing distance from the root and on each level
from left to right as before.
Traverse the edges in this order, and add the edges to $S'$ if the corresponding
key is independent of the keys in $S'$. However, stop this traversal the first time
a dependent key occurs. In this way $S'$ will be an ordered subset of $S$
and the tree-structure will only depend on $s' = \abs{S'}$. Fix this value $s'$.
Since there is a key which is dependent on the keys in $S'$ there are at most
$s'^{O(1)}m^{s'-1}$ ways to choose $S'$ by \Cref{lem:stdSave} assuming that
$s'^c \le \frac{4}{5}m$, i.e.
assuming that $n$ is larger than some constant depending on $c$.

Every internal node of $T_{k,d}$ has exactly $d$ children that are
leaves. Therefore, there can be at most one node in $S'$ having less than
$d$ children that are leaves and belong to $S'$.
Let $v_1,\ldots,v_l$ denote the internal nodes in $S'$, where $l$ is
the number of internal nodes. Let $w_i$ denote the number of children of $v_i$
that are leaves.
Similar to case 1, the structure of $S'$ is independent of the order of the
leaves with the same parent. Therefore $S'$ can be chosen in at most
$s'^{O(1)}m^{s'-1}\prod_{i=1}^l \frac{1}{w_i!}$ ways. Since
$w_i! \ge \left ( \frac{w_i}{e} \right )^{w_i}$ we see that:
\[
\prod_{i=1}^l \frac{1}{w_i!} \le
\prod_{i=1}^l \left ( \frac{e}{w_i} \right )^{w_i}
\]
Letting $w = \sum_{i=1}^l w_i$ the concavity of $x \to x\log(e/x)$
combined with Jensen's inequality yields:
\[
\prod_{i=1}^l \left ( \frac{e}{w_i} \right )^{w_i}
\le
\left ( \frac{le}{w} \right )^{w}
\]
At most one of the $w_i$'s can be smaller than $d$, so
wlog.~assume that $w_1,\ldots,w_{l-1} \ge d$. The total number of nodes must
be at least $l + d(l-1)$, i.e. $s' \ge l + d(l-1)$ giving $l \le \frac{s'+d}{d+1}$.
Since $l+w = s'$ we see that:
\[
\frac{l}{w} \le
\frac{\frac{s'+d}{d+1}}{\frac{s'd-d}{d+1}} =
\frac{1}{d} \cdot \frac{s'+d}{s'-1} \le
\frac{2}{d}
\]
Where the last inequality holds assuming that $n$
(and hence $s' \ge \lg \lg n$) is larger
than a constant. Since $w \ge (s-1)\frac{d}{d+1}$ we see that:
\[
\prod_{i=1}^l \frac{1}{w_i!} \le
\left ( \left ( \frac{2e}{d} \right )^{\frac{d}{d+1}} \right )^{s'-1}
\]
Assume that $d$ is chosen such that
$\left ( \frac{2e}{d} \right )^{\frac{d}{d+1}} \le \frac{n}{2m}$. The number of
cases that we need to consider is then at most:
\[
s'^{O(1)}m^{s'-1}\prod_{i=1}^l \frac{1}{w_i!}
\le
s'^{O(1)}\left ( \frac{n}{2} \right )^{s'-1}
\]

Since $S'$ is a tree
there are $s'-1$ equalities on the form $h_k(x) = h_k(y)$ where
$k \in \set{1,2}, x,y \in S'$ that must be satisfied if $S'$ occurs. Since we
know the tree structure from knowing $s'$ there are at most two ways
two choose these equalities. This means that the probability that a specific
$S'$ occurs is bounded by $2n^{-(s'-1)}$.
For a fixed $\abs{S'} = s'$ the probability that there
exists $S'$ with $s'$ elements is therefore bounded by:
\[
2s'^{O(1)}\left ( \frac{n}{2} \right )^{s'-1}n^{-(s'-1)}
=
2s'^{O(1)} 2^{-s'+1}
\]
A union bound over all $s' \ge \lg \lg n$ now yields the desired upper bound:
\begin{align*}
    &\sum_{s' \ge \ceil{\lg \lg n}}
    2s'^{O(1)} 2^{-s'+1}
    \\ &\qquad\le
    2^{-\lg \lg n+3}
    \sum_{k \ge 1}
    \left (k+\ceil{\lg \lg n}-1 \right )^{O(1)}
    2^{-k}
    \\ &\qquad\le
    \frac{8}{\lg n}
    \ceil{\lg \lg n}^{O(1)}
    \sum_{k \ge 1}
    k^{O(1)}
    2^{-k}
    \\ &\qquad=
    \frac{(\lg \lg n)^{O(1)}}{\lg n}
\end{align*}
\hfill $\Diamond$

\textbf{Case 3: Not all, but at least $(\lg \lg n)/2$ edges incident to the root lie in $S$:}
Let $S' \subset S$ be the set of independent keys adjacent to the root, and set
$s' = |S'|$. By \Cref{lem:stdSave}, $S'$
can be chosen in no more than $\frac{s'^{O(1)}m^{s'-1}}{s'!}$ ways since there must exist a key
(corresponding to an edge incident to the root) which is dependent on the keys
in $S'$ and the order of the keys are irrelevant. Since all the keys in $S'$ are
independent, the probability that $h_0(x)$ or $h_1(x)$ are the same for all the keys $x \in S'$
is at most $2n^{-(s'-1)}$. So the probability that such a $S'$ can be found is at most:
\begin{align*}
\frac{s'^{O(1)}m^{s'-1}}{s'!} \cdot (2n^{-(s'-1)})
&=
2s'^{O(1)}
\frac{\left (
	\frac{m}{n}
	\right )^{s'-1}}{s'!}
    \\ &\le
2s'^{O(1)}
\left (
\frac{me}{ns'}
\right )^{s'-1}
\\ &=
O((\lg \lg n)^{-1})
\end{align*}
\hfill $\Diamond$

For case 4, we will use the following generalization of
\Cref{lem:stdSave}, which is proved in \cref{sec:lemmaproofs}.
\begin{lemma}\label{lem:extraSave}
	Let $X\subseteq U$ with $|X| = n$ and fix $s$ such that $s^c\le
	\frac{4}{5}n$. The number of $s$-tuples $(x_1,\ldots, x_s)$ for which
	there
	exists distinct $y_1,\ldots, y_k\in X\sm\{x_1,\ldots,x_s\}$ for
	$k\ge\max(s-1,5)$ such that each
	$h(y_i)$ is dependent on $h(x_1),\ldots, h(x_s)$ is at most
	\[
	s^6\frac{15^c}{120}n^{s-2} + s^6\frac{9^c}{36}n^{s-2} +
	s^5\frac{9^c}{4}n^{s-3/2} = s^{O(1)}n^{s-3/2}\ .
	\]
\end{lemma}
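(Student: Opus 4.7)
The plan is to generalize the casework of \cref{lem:stdSave} by analyzing the structure of the minimal dependency subsets $I_i \subseteq \{1,\ldots,s\}$ with $y_i = \bigoplus_{j \in I_i} x_j$ (as sets of position characters, per \cref{lem:keydep}). Since each $y_i \in X \sm \{x_1,\ldots,x_s\}$, each $|I_i|$ must be odd and at least $3$. Given $k \geq \max(s-1,5)$ such distinct subsets, a structural case analysis will show that one of three scenarios must hold, each producing one of the terms in the bound.

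First, I would establish a combinatorial dichotomy: if all $|I_i| = 3$, then $k \geq s-1$ precludes every pair of $I_i$'s sharing exactly two indices (since this configuration forms either a sunflower with core of size $2$ admitting at most $s-2$ petals, or a simplex-like structure of at most $4$ hyperedges; the boundary case is handled by the requirement $k \geq 5$ for small $s$). Hence either (i) some pair of size-$3$ subsets shares exactly one index, (ii) some pair of size-$3$ subsets is disjoint, or (iii) a residual case arises (such as some $|I_i| \geq 5$, or pairs of size-$3$ subsets with $|I_i \cap I_j|=2$ together with auxiliary structure).

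For Case (i), the 6-tuple $(y_i, y_j, x_{I_i \triangle I_j})$ has zero XOR sum. \cref{zeroSum} with $t=3$ bounds such ordered 6-tuples by $15^c n^3$. Combined with $O(s^6/120)$ ordered position assignments (the multinomial factor $5! = 120$ accounts for the $5$ distinguished indices in roles of common/${I_i}$-only/${I_j}$-only) and $n^{s-5}$ free choices for remaining positions, this yields the first term $\frac{s^6 \cdot 15^c}{120} n^{s-2}$. For Case (ii), two disjoint 4-tuple zero-sum relations give $9^c n^4$ via \cref{zeroSum} with $t=2$; with $\binom{s}{3}\binom{s-3}{3} \approx s^6/36$ position assignments and $n^{s-6}$ free choices, we obtain the second term $\frac{s^6 \cdot 9^c}{36} n^{s-2}$.

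The main obstacle is Case (iii), where the saving drops from $n^2$ to $n^{3/2}$. Here, a direct application of \cref{zeroSum} on the combined relations only saves $n$, so we need to extract an additional $n^{1/2}$ via a Cauchy--Schwarz argument. Specifically, on the shared XOR value $v = \bigoplus_{j \in I_i \cap I_j} x_j$ between two dependency relations, define counting functions $A(v), B(v)$ for the number of solutions to each relation consistent with $v$, and bound the joint count by $\sum_v A(v) B(v) \leq \sqrt{(\sum_v A(v)^2)(\sum_v B(v)^2)}$. Each second moment is itself a zero-sum count bounded by \cref{zeroSum} applied to a larger tuple. The square root then yields the crucial $n^{1/2}$ factor; combined with the $9^c$ from the two 4-tuples and $O(s^5)$ from position assignments, this gives the third term $\frac{s^5 \cdot 9^c}{4} n^{s-3/2}$. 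Verifying the exact second-moment bounds and tracking the constants through the Cauchy--Schwarz step is the technically delicate part of the proof.
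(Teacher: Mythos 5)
Your overall skeleton matches the paper's: reduce to the minimal dependency sets $I_i$ with $y_i=\bigoplus_{j\in I_i}x_j$, note each $|I_i|\ge 3$ (and odd), and use $k\ge\max(s-1,5)$ together with the sunflower-type argument to rule out the configuration where all $I_i$ have size $3$ and every pair intersects in exactly two indices. That dichotomy is exactly the paper's (tersely stated) claim that some pair must satisfy $|I_j\cap I_l|\le 1$, and your case (ii) (two disjoint triples) is handled exactly as in the paper, yielding the term $s^6\frac{9^c}{36}n^{s-2}$.

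The gap is in your case (i), the pair with $|I_i\cap I_j|=1$. The zero-sum $6$-tuple $(y_i,y_j,(x_m)_{m\in I_i\triangle I_j})$ constrains only the four coordinates in $I_i\triangle I_j$; the shared coordinate is not among them and costs an extra factor of $n$, so your computation gives $15^cn^3\cdot n\cdot n^{s-5}=15^cn^{s-1}$, not $n^{s-2}$ --- precisely the ``direct application only saves $n$'' trap you yourself flag in case (iii). Moreover, no rearrangement of \cref{zeroSum} alone recovers $n^{s-2}$ here: this is the case that intrinsically loses a $\sqrt{n}$ and is responsible for the $n^{s-3/2}$ term in the lemma. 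The paper handles it by first choosing $(x_m)_{m\in I_j}$ in $3^cn^2$ ways via \cref{zeroSum} and then, with the shared coordinate fixed, choosing the remaining two coordinates of $I_l$ in $3^cn^{3/2}$ ways via \cref{zeroSumProvable} with one of the sets $A_i$ a singleton --- which is exactly the conditioning-on-the-shared-value device you reserve for case (iii). Conversely, your case (iii) bundles in $|I_i|\ge 5$, which needs no such device: the bound $\binom{s}{|I|}(|I|!!)^cn^{s-|I|/2+1/2}$ from the proof of \cref{lem:stdSave} already gives $s^6\frac{15^c}{120}n^{s-2}$ at $|I|=5$. So all the needed ingredients appear in your proposal, but the $|I_i\cap I_j|=1$ and $|I_i|\ge 5$ cases are attached to the wrong terms, and as written case (i) does not achieve the saving the lemma requires.
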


\textbf{Case 4: There are less than $(\lg \lg n)/2$ edges incident to the root in $S$:}
Let $S' \subset S$ be the set of keys corresponding to the edges from $S$
incident to the root and let $s' = \abs{S'}$.
Since the other keys incident to the root must be
dependent on the keys from $S'$, \Cref{lem:extraSave} states that $S'$ can
be chosen in at most $s'^{O(1)}m^{s'-3/2}$ ways. Since all the keys in $S'$ are
independent the probability that $h_0(x)$ or $h_1(x)$ are the same for
all the keys $x \in S'$ is at most $2n^{-(s'-1)}$. Thus, the probability of such a
set $S'$ occurring is bounded by:
\begin{align*}
s'^{O(1)}m^{s'-3/2}
\cdot
( 2n^{-(s'-1)} )
&\le
s'^{O(1)} 2n^{-1/2}
\left ( \frac{m}{n} \right )^{s'-3/2}
\\ &=
(\log n)^{O(1)}n^{-1/2}
\end{align*}
\hfill $\Diamond$

This covers all cases for the set $S$.\qed
\end{proof}

Consider the case of distributing $m$ balls into $n$ bins. Note that the proof actually gives an expected maximum load of
$O(m/n) + \lg \lg n + O(1)$ if
$m/n = o((\lg n)/(\lg \lg n))$. However, this only matches the behaviour
of truly random hash functions under the assumption that $m = O(n)$.

The same techniques can be used to show that
$\Omega\left(\frac{m}{n}\log n\right)$-independent hash functions yield a
maximum load of
$O(m/n) + \lg \lg n + O(1)$ with high probability (this is essentially case 1 in the proof).
This implies that $\Omega(\lg n)$-independence hashing is sufficient to give the
same theoretical guarantees as truly random hash functions in the context
of the power of two choices when $m = O(n)$.

\section{Proofs of structural lemmas}\label{sec:lemmaproofs}
In this section we prove the lemmas used in \Cref{sec:whp,sec:expect}.

The following lemma is a generalization of \Cref{zeroSum} and is proved in
\cite{Dahlgaard:2015}.
\begin{lemma}[\cite{Dahlgaard:2015}]
	\label{zeroSumProvable}
	Let $A_1,\ldots,A_{2t} \subset U$ be subsets of $U$. The number of $2t$-tuples $(x_1, \ldots, x_{2t}) \in A_1 \times \cdots \times A_{2t}$ such that
	\begin{align}
	\label{eq:zeroSumProvable}
	x_1 \oplus \cdots \oplus x_{2t} = \emptyset
	\end{align}
	is at most $((2t-1)!!)^c \prod_{i=1}^{2t} \sqrt{\abs{A_i}}$. (Where $(2t-1)!! = (2t-1)(2t-3) \cdots 3 \cdot 1$)
\end{lemma}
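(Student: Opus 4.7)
The plan is to prove \Cref{zeroSumProvable} by induction on the number of characters $c$, using the standard pairing + Cauchy--Schwarz trick to handle one character at a time. The base case $c=0$ is trivial since $\Sigma^0=\{\emptyset\}$ and the bound becomes $\prod_i\sqrt{|A_i|}$, which is at least the (at most one) valid tuple.

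For the inductive step, assume the bound for $c-1$ and consider sets $A_1,\ldots,A_{2t}\subseteq\Sigma^c$. For any choice of last characters $\alpha_1,\ldots,\alpha_{2t}\in\Sigma$, define the fibers
\[
B_j(\alpha_j)=\{(y_0,\ldots,y_{c-2})\in\Sigma^{c-1}:(y_0,\ldots,y_{c-2},\alpha_j)\in A_j\}\ .
\]
A tuple $(x_1,\ldots,x_{2t})\in A_1\times\cdots\times A_{2t}$ has $x_1\oplus\cdots\oplus x_{2t}=\emptyset$ iff the last characters XOR to $0$ \emph{and} the first $c-1$ characters, lying in $B_1(\alpha_1)\times\cdots\times B_{2t}(\alpha_{2t})$, XOR to $0$. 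Applying the inductive hypothesis to the fibers gives an upper bound of $((2t-1)!!)^{c-1}\prod_j\sqrt{|B_j(\alpha_j)|}$ for each choice of last characters compatible with the XOR condition, so the total count is at most
\[
((2t-1)!!)^{c-1}\sum_{\alpha_1,\ldots,\alpha_{2t}:\,\bigoplus\alpha_j=0}\ \prod_{j=1}^{2t}\sqrt{|B_j(\alpha_j)|}\ .
\]

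The heart of the argument is bounding this last sum by $(2t-1)!!\prod_j\sqrt{|A_j|}$, which closes the induction. The condition $\bigoplus\alpha_j=0$ means every value appears an even number of times among $\alpha_1,\ldots,\alpha_{2t}$, so such a tuple admits at least one perfect matching $\pi$ on $\{1,\ldots,2t\}$ with $\alpha_j=\alpha_{j'}$ for every $(j,j')\in\pi$. Bounding the sum by summing over all $(2t-1)!!$ perfect matchings:
\[
\sum_{\alpha:\,\bigoplus\alpha_j=0}\prod_j\sqrt{|B_j(\alpha_j)|}\ \le\ \sum_{\pi}\prod_{(j,j')\in\pi}\sum_{\alpha\in\Sigma}\sqrt{|B_j(\alpha)|\,|B_{j'}(\alpha)|}\ .
\]
For each pair, Cauchy--Schwarz gives $\sum_\alpha\sqrt{|B_j(\alpha)||B_{j'}(\alpha)|}\le\sqrt{\sum_\alpha|B_j(\alpha)|}\sqrt{\sum_\alpha|B_{j'}(\alpha)|}=\sqrt{|A_j||A_{j'}|}$, since the fibers partition $A_j$. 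Multiplying over the $t$ pairs of a matching yields $\prod_j\sqrt{|A_j|}$, and summing over the $(2t-1)!!$ matchings finishes the step.

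The main subtlety I would watch for is the double role of the pairing/Cauchy--Schwarz argument: it serves both as the $c=1$ base and as the step that peels off the last character in the induction, so the essentially same combinatorial bound gets invoked $c$ times, which explains the factor $((2t-1)!!)^c$. I do not expect the calculation to hide any real difficulty; the only care needed is that the pairing trick yields only an upper bound (each valid tuple may admit several matchings, but that is fine), and that Cauchy--Schwarz is applied to the correct marginals $\sum_\alpha|B_j(\alpha)|=|A_j|$.
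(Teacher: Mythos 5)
Your proof is correct: the induction on the number of character positions, the reduction of the even-multiplicity condition on the last coordinate to a sum over the $(2t-1)!!$ perfect matchings, and the per-pair Cauchy--Schwarz step using $\sum_{\alpha}|B_j(\alpha)|=|A_j|$ all go through, and together they yield exactly the claimed factor $((2t-1)!!)^c\prod_i\sqrt{|A_i|}$. The paper does not prove this lemma itself but defers to \cite{Dahlgaard:2015}; your argument is the standard one used there, so there is nothing to flag.
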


We use \Cref{zeroSumProvable} in our proof of \Cref{lem:extraSave}:

\begin{proof}[Proof of \Cref{lem:extraSave}]
	For each $j = 1,\ldots,k$ let $I_j \subset \oneToN{s}$ be such that
    $y_j = \bigoplus_{i \in I_j} x_i$ for all choices of $h$. There are at
    most $s\choose |I_j|$ ways to choose $I_j$. Note that
    $(x_i)_{i\in\{1,\ldots,s\}\sm I_j}$ can be chosen in at most
    $n^{s-|I_j|}$ ways and by \Cref{zeroSum} $(x_i)_{i\in I_j}$ can be
    chosen in at most $((|I_j|)!!)^cn^{(|I_j|+1)/2}$ ways. I.e. for a fixed
    value of $|I_j|$ an upper bound is:
    \begin{equation}\label{eq:setlimit}
        {s\choose |I_j|}((|I_j|)!!)^cn^{s-|I_j|/2+1/2}\ .
    \end{equation}
    If $|I_j| > 3$ we can use \eqref{eq:setlimit} to get an upper bound on the
    number of such $s$-tuples of

	\[
	s \cdot
	\binom{s}{5}
	(5!!)^c
	n^{s-5/2+1/2}
	\le
	s^6 \frac{15^c}{120} n^{s-2}
	\]
	Now assume that $\abs{I_j} = 3$ for $j = 1,\ldots,k$. Note that the sets $I_j$ must be distinct
	and since $I_j \subset \oneToN{s}$ and $k \ge \max\set{5,s-1}$ there must exist $j,l \in \oneToN{k}$
	such that $\abs{I_j \cap I_l} \le 1$.

	Case $\abs{I_j \cap I_l} = 0$: In this case the number of possible values for $(x_i)_{i \in I_j}$,
	$(x_i)_{i \in I_l}$, $I_j$, and $I_l$ is, by \Cref{zeroSum}, no more than:
	\[
	\binom{s}{3,3}
	\left (
	(3!!)^c
	n^{2}
	\right )^2
	\]
	and the remaining $x_i$'s can be chosen in at most $n^{s-6}$ ways giving an upper bound of:
	\[
	\frac{s^6}{(3!)^2}
	\left (
	(3!!)^c
	n^{2}
	\right )^2
	n^{s-6}
	=
	s^6
	\frac{9^c}{36}
	n^{s-2}
	\]

	Case $\abs{I_j \cap I_l} = 1$: $I_j$ and $I_l$ can be chosen in $\binom{s}{3,2} \cdot 3$ ways.
	By \Cref{zeroSum} $(x_i)_{i \in I_j}$ can be chosen in $(3!!)^c n^{2}$ ways. The number of ways
	to choose $(x_i)_{i \in I_l}$ once $(x_i)_{i \in I_j}$ is then by \Cref{zeroSumProvable}
	no more than $(3!!)^c n^{3/2}$ since we choose one of the $A_i$'s to be a singleton. The
	remaining $x_i$'s can be chosen in at most $n^{s-5}$ ways giving a total upper bound of:
	\[
	\binom{s}{3,2} \cdot 3
	\cdot
	(3^c n^2) \cdot
	(3^c n^{3/2}) \cdot
	n^{s-5}
	\le
	s^5
	\frac{9^c}{4}
	n^{s-3/2}
	\]
	Which concludes the proof.\qed
\end{proof}

The proof of \Cref{lem:stdSave} follows the same argument.
\begin{proof}[Proof of \Cref{lem:stdSave}]
	Since $h(y)$ is dependent of $h(x_1), \ldots, h(x_s)$ there exists a subset $I \subset \{1,\ldots,s\}$ such that for all choices of $h$:
	\[
	\bigoplus_{i \in I} x_i = y
	\]
	Fix $\abs{I}$ and note that $\abs{I} \ge 3$ (by 3-independence). There are $\binom{s}{\abs{I}}$ ways to choose $I$. Note that $(x_i)_{i \in \oneToN{s} \backslash I}$ can be chosen in at most $n^{s - \abs{I}}$ ways and by \Cref{zeroSum} $(x_i)_{i \in I}$ can be chosen in at most $((\abs{I})!!)^c n^{(\abs{I}+1)/2}$ ways. I.e. for a fixed value of $\abs{I}$ an upper bound is:
	\[
	\binom{s}{\abs{I}}
	((\abs{I})!!)^c
	n^{s-\abs{I}/2+1/2}
	\]
	We can show that this upper bound is maximal when $\abs{I} = 3$. Since $\abs{I}$ is odd it suffices to show that the value decreases when $\abs{I}$ increases by $2$ as long as $\abs{I}+2 \le s$.
	Consider the following fraction:
	\begin{align*}
    &\frac{
		\binom{s}{\abs{I}}
		((\abs{I})!!)^c
		n^{s-\abs{I}/2+1/2}
	}{
	\binom{s}{\abs{I}+2}
	((\abs{I}+2)!!)^c
	n^{s-(\abs{I}+2)/2+1/2}
}
\\&\qquad=
\frac{
	(\abs{I}+1)(\abs{I}+2)
	n
}{
(s-\abs{I})(s-\abs{I}-1)
(\abs{I}+2)^c
}
\\&\qquad\ge
\frac{\frac{4}{5}n}{s^c}
\end{align*}
By the assumption this fraction is at least $1$, and hence the upper bound
decreases with $\abs{I}$. Therefore, as $|I|$ grows there are fewer ways to
describe $(x_1,\ldots,x_s)$.

Since $3\le |I| \le s$, the number of ways to choose
$(x_1,\ldots,x_s)$ is upper bounded by:
\[
s \cdot
\binom{s}{3}
(3!!)^c
n^{s-3/2+1/2}
\le
s^4 \frac{3^c}{6} n^{s-1}
\]\qed
\end{proof}

In order to prove \Cref{whpSave} we will need the following combinatorial
lemma.

\begin{lemma}
	\label{whpSaveCombi}
	Let $s,k,c \ge 1$ be integers and $A_1,\ldots,A_{(2k)^c s+1}$ be
	non-empty subsets of $\oneToN{s}$, such that for every $B \subset \oneToN{s}$:
	\[
	\abs{\set{A_i \mid A_i \subset B}} \le \abs{B}^c
	\]
	Then there exists $I \subset \oneToN{(2k)^c s+1}$ such that $\abs{I} \le k$ and
	\[
	f(I) \stackrel{def} = \abs{\bigcup_{i \in I} A_i} - \abs{I} \ge k
	\]
\end{lemma}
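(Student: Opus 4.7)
I would argue by contradiction: suppose no such $I$ exists, so $f^* := \max_{\abs{I} \le k} f(I) \le k-1$. Fix $I^*$ achieving the maximum and let $U^* := \bigcup_{i \in I^*} A_i$, so $\abs{U^*} = \abs{I^*} + f^* \le 2k-1$. The goal is to extract a set $U \subseteq \oneToN{s}$ with $\abs{U} \le 2k-1$ such that \emph{every} $A_j$ satisfies $\abs{A_j \sm U} \le 1$, and then derive a counting contradiction using the hypothesis.

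The extraction of $U$ splits into two cases. If $\abs{I^*} < k$, then for each $j \notin I^*$ the set $I^* \cup \set{j}$ still has size at most $k$, so $f(I^* \cup \set{j}) \le f^*$; unwinding the definition of $f$ gives $\abs{A_j \sm U^*} \le 1$, and I set $U := U^*$. If $\abs{I^*} = k$, let $u_i := A_i \sm \bigcup_{i' \in I^* \sm \set{i}} A_{i'}$ be the elements unique to $A_i$ among $I^*$. The $u_i$ are pairwise disjoint subsets of $U^*$, so $\sum_{i \in I^*} \abs{u_i} \le \abs{U^*} \le 2k-1 < 2k$; by pigeonhole some $i^* \in I^*$ has $\abs{u_{i^*}} \le 1$. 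Set $U := \bigcup_{i' \in I^* \sm \set{i^*}} A_{i'}$, which has size $\abs{U^*} - \abs{u_{i^*}} \le 2k-1$. For $j \notin I^*$, the swap $I' := (I^* \sm \set{i^*}) \cup \set{j}$ has size $k$, and $f(I') \le f^*$ forces $\abs{A_j \sm U} \le \abs{u_{i^*}} \le 1$; the cases $j \in I^*$ are immediate from the definition of $u_{i^*}$.

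For the counting, let $N_A = \abs{\set{j : A_j \subseteq U}}$ and $N_B = \abs{\set{j : \abs{A_j \sm U} = 1}}$, so $n := (2k)^c s + 1 = N_A + N_B$. The hypothesis gives $N_A \le \abs{U}^c$ and, for each $x \in \oneToN{s} \sm U$, $\abs{\set{j : A_j \subseteq U \cup \set{x}}} \le (\abs{U}+1)^c$. Summing the latter over all $x \notin U$ counts each $N_B$-set exactly once (for its unique element outside $U$) and each $N_A$-set $s-\abs{U}$ times, so
\[
(s - \abs{U}) N_A + N_B \le (s-\abs{U})(\abs{U}+1)^c.
\]
Taking $B = \oneToN{s}$ in the hypothesis yields $n \le s^c$, which for $c \ge 2$ forces $s \ge 2k$ (the $c=1$ case makes the premise vacuous, since $n > s$). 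Hence $s - \abs{U} \ge 1$, and so
\[
n = N_A + N_B \le (s-\abs{U}) N_A + N_B \le (s-\abs{U})(\abs{U}+1)^c \le s \cdot (2k)^c,
\]
contradicting $n = (2k)^c s + 1$.

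\textbf{Main obstacle.} The delicate step is Case 2, where the maximum is attained at $\abs{I^*} = k$ and simple extension is unavailable; a naive swap only gives $\abs{A_j \sm U} \le \abs{u_{i^*}}$, which by itself is not small enough. The pigeonhole on the total ``unique mass'' $\sum_i \abs{u_i} \le \abs{U^*} < 2k$ is the crucial extra ingredient, producing an $i^*$ with $\abs{u_{i^*}} \le 1$ and thereby making the bound on $\abs{A_j \sm U}$ uniform across all $j$.
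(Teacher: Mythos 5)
Your proof is correct, but it is organized differently from the paper's. The paper proves a local augmentation step --- if $\abs{\bigcup_{i\in I}A_i}<2k$ then some index $r$ strictly increases $f$ --- and then runs a greedy from $I=\emptyset$; since $f$ gains at least $1$ per step, after at most $k$ steps the union reaches size $2k$, which simultaneously yields $\abs{I}\le k$ and $f(I)\ge 2k-k=k$, with no boundary case to handle. You instead argue globally from a maximizer $I^*$ of $f$ over $\abs{I}\le k$, which forces you to treat the saturated case $\abs{I^*}=k$ separately via the pigeonhole on the private elements $u_i$ and the swap $I'=(I^*\sm\set{i^*})\cup\set{j}$ --- a genuine extra ingredient that the greedy sidesteps, and your handling of it is sound. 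Both proofs then reach the same counting contradiction: once every $A_j$ has at most one element outside a set $U$ with $\abs{U}<2k$, all $(2k)^c s+1$ sets must fit into sets of the form $U\cup\set{x}$, each holding at most $(2k)^c$ of them. Your double count separating $N_A$ from $N_B$, together with the aside that the $c=1$ hypothesis is vacuous and that $c\ge 2$ forces $s>2k$ (needed so that $s-\abs{U}\ge 1$), is more careful than the paper's one-line version of this count, but buys nothing extra here. Net: your route is valid and self-contained; the paper's potential-function greedy is shorter precisely because it never meets your Case 2.
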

\begin{proof}
	Let $I \subset \oneToN{(2k)^c s+1}$ be such that $\abs{\cup_{i \in I} A_i}
	< 2k$.
	We want to show
	that there exists $J = I \cup \set{r}$ for some $r \in \oneToN{(2k)^c s+1}$ such
	that $f(J) > f(I)$.
	Let $A = \cup_{i \in I} A_i$ and assume for the sake of contradiction
	that no such $r$ exists. This implies that $\abs{A_r \setminus A} \le 1$
	for all $r \in \oneToN{(2k)^c s + 1}$. I.e. that each $A_r$ is contained in
	one of the sets
	\[
	\left ( A \cup \set{1} \right ),
	\left ( A \cup \set{2} \right ),
	\ldots,
	\left ( A \cup \set{s} \right )
	\]
	By assumption, each of these sets contains no more than $(|A|+1)^c$
	sets $A_r$, and thus they contain at most $(|A|+1)^c s$ sets combined. This
	means that
	\[
	(2k)^c s + 1 \le \left(\abs{A}+1\right)^c s \le (2k)^c s\ ,
	\]
	which is a contradiction. Thus there must exists an $r$ such that
	$f(I\cup\set{r}) > f(I)$.

	Now consider the following greedy algorithm: Let $I := \emptyset$ and
	iteratively set $I := I \cup \set{r}$ for such an $r$ until
	$\abs{\cup_{i \in I} A_i} \ge 2k$.
	Since $f(I)$ increases in each step, the algorithm stops after at most $k$
	steps. This implies that $f(I) \ge 2k-k=k$ and $\abs{I} \le k$ as
    desired.\qed
\end{proof}

We can use \Cref{whpSaveCombi} to show \Cref{whpSave}.

\begin{proof}[Proof of \Cref{whpSave}]
	For each $i \in \oneToN{(2k)^c s + 1}$ let $A_i \subset \oneToN{s}$ be such
	that:
	\[
	\bigoplus_{j \in A_i} x_j = y_i
	\]
	By \cref{whpSaveCombi} there exists $I \subset \oneToN{(2k)^c + 1}$ such
	that for $A := \cup_{i \in I} A_i$, $\abs{A} - \abs{I} \ge k$, $\abs{I} \le
	k$.

	It is enough to show the lemma for a fixed $\abs{A}$ and $\abs{I}$ {as
		these can be chosen in at most $ks = O(s)$.} Fix $\abs{A} = a$
	and $\abs{I} = r$.

	Let $I = \set{v_1,\ldots,v_r}$ and for each $j \in \oneToN{r}$ define
	$B_j$ as:
	\[
	B_j = A_{v_j} \setminus \left ( \bigcup_{i < j} A_{v_i} \right )
	\]
	Wlog.~assume that $a = \sum_{j < r} \abs{B_j} \le 2k$. (Otherwise there
	exists a smaller set $I$)
	The number of ways to choose $(B_j)_{1 \le j \le r}$ is at most
	$\binom{s}{a} r^a$: There are $\binom{s}{a}$ ways to choose $A$ and
	$r^a$ ways to partition $A$ into $B_1,\ldots,B_r$.

	Now, fix the choice of $B_1,\ldots,B_r$. We will bound the number of ways to
	choose $(x_i)_{i \in B_j}$ given that $(x_i)_{i \in B_1}$, \ldots,
	$(x_i)_{i \in B_{j-1}}$ are chosen. The number of ways to choose $A_j$ is
	at most $2^{2k}$ for $j\in I$. For a fixed choice of $A_j$ the number of ways to choose
	$(x_i)_{i \in B_j}$ is at most $(\abs{A_j}!!)^c n^{(\abs{B_j}+1)/2}$
	by \cref{zeroSumProvable}. Hence, the number of ways to choose $(x_i)_{i
		\in A}$ is at most:
    \begin{align*}
	\prod_{j=1}^r \left ( 2^{2k} (\abs{A_j}!!)^c n^{(\abs{B_j}+1)/2} \right )
    &\le
	2^{2kr} (a!!)^{rc} n^{(a+r)/2}
    \\&\le
	2^{2k^2} (a!)^{kc} n^{(a+r)/2}
    \end{align*}
	The number of ways to choose the remaining $(x_i)_{i \notin A}$
	is trivially
	bounded by $n^{s-a}$ giving a total upper bound on the number of ways to
	choose $(x_i)_{i \in \oneToN{s}}$ of:
	\[
	\binom{s}{a} k^a
	2^{2k^2} (a!)^{kc} n^{s-a/2+r/2}
	\]
	Now note that if $a < s$:
    \begin{align*}
    &\frac{\binom{s}{a+1} k^{a+1} 2^{2k^2} ((a+1)!)^{kc} n^{s-(a+1)/2+r/2}}
	{\binom{s}{a} k^a 2^{2k^2} (a!)^{kc} n^{s-a/2+r/2}}
    \\&\qquad=
	\frac{(s-a)k(a+1)^{kc}}
	{(a+1)n^{1/2}}
    \\&\qquad<
	1
    \end{align*}
	This implies that the upper bound is biggest when $a$ is smallest, i.e.
	when $a = r+k$. In this case the upper bound is:
    \begin{align*}
    &\binom{s}{k} k^k
	2^{2k^2} ((r+k)!)^{kc} n^{s-k/2}
    \\&\qquad\le
	\binom{s}{k} k^k 2^{2k^2} ((2k)!)^{kc} n^{s-k/2}
    \\&\qquad=
	s^{O(1)} n^{s-k/2}
    \end{align*}
	which concludes the proof.\qed
\end{proof}

\bibliographystyle{amsplain}
\bibliography{general}

\end{document}